\newtheorem{theorem}{Theorem}
\newtheorem{lemma}{Lemma}
\newcommand{\BO}[1]{{O}\left(#1\right)}
\newcommand{\BT}[1]{{\Theta}\left(#1\right)}
\newcommand{\BOM}[1]{{\Omega}\left(#1\right)}
\newcommand{\A}{\mathcal{A}}
\newcommand{\e}{\epsilon}
\newcommand{\p}{P}
\newcommand{\de}{\delta}
\title{Communication Lower Bounds for Distributed-Memory Computations\thanks{This
	work was supported, in part, by the University of Padova Projects \emph{STPD08JA32}
	and \emph{CPDA121378}.}}
\author{Michele Scquizzato\thanks{Department of Computer Science, University of
   Pittsburgh, Pittsburgh, PA 15260, USA. Supported by a fellowship
   of ``Fondazione Ing.\ Aldo Gini'', University of Padova, Italy.
   \hbox{E-mail}:~\texttt{scquizza@pitt.edu}.
   Most of this work was done while this author was a Ph.D.\ student
   at the University of Padova.}
   \and
   Francesco Silvestri\thanks{Department of Information Engineering, University of
   Padova, 35131 Padova, Italy. 
   \hbox{E-mail}:~\texttt{silvest1@dei.unipd.it}}}
\date{\today}
\begin{document}

\maketitle

\begin{abstract}
We give lower bounds on the communication complexity required to solve several
computational problems in a distributed-memory parallel machine, namely standard
matrix multiplication, stencil computations, comparison sorting, and the Fast
Fourier Transform. We revisit the assumptions under which preceding results were
derived and provide new lower bounds which use much weaker and appropriate hypotheses.
Our bounds rely on a mild assumption on work distribution, and strengthen previous
results which require either the computation to be balanced among the processors,
or specific initial distributions of the input data, or an upper bound on the size
of processors' local memories.
\end{abstract}

\noindent \textbf{Keywords}: Communication, lower bounds, distributed memory,
parallel algorithms, BSP.

\newpage

\section{Introduction}\label{sec:intro}

Communication is a major factor determining the performance of algorithms
on current computing systems, as the time and energy needed to transfer
data between processing and storage elements is often significantly higher
than that of performing arithmetic operations. The gap between computation
and communication costs, which is ultimately due to basic physical principles,
is expected to become wider and wider as architectural advances allow to build
systems of increasing size and complexity. Hence, the cost of data movement
will play an even greater role in future years.

As in all endeavors where performance is systematically pursued, it is
important to evaluate the distance from optimality of a proposed algorithmic
solution, by establishing appropriate lower bounds. Given the well-known
difficulty of establishing lower bounds, results are often obtained under
restrictive assumptions that may severely limit their applicability. It is
therefore important to progressively reduce or fully eliminate such restrictions.

In this spirit, we consider lower bounds on the amount of communication that is
required to solve some classical computational problems on a distributed-memory
parallel system. Specifically, we revisit the assumptions and constraints
under which preceding results were derived, and prove new lower bounds which
use much weaker hypotheses and thus have wider applicability. Even when the
functional form of the bounds remains the same, our results do yield new
insights to algorithm developers since they might reveal if some settings
are needed, or not, in order to obtain better performance.

We model the machine using the standard \emph{Bulk Synchronous Parallel}
(BSP) model of computation~\cite{Valiant90}, which consists of a collection
of $p$ processors, each equipped with an unbounded private memory and
communicating with each other through a communication network. The starting
point of any investigation of algorithms for a distributed-memory model is
the specification of an I/O protocol, which defines where input elements reside
at the beginning of the computation and where the outputs produced by the
algorithm must be placed. The distribution of inputs and outputs effectively
forms a part of the problem specification, thus restricting the applicability
of upper and lower bounds. Much of previous work on BSP algorithms considers a
version of the BSP model equipped with an additional \emph{external memory},
which serves as the source of the input and the destination for the output
(see, e.g.,~\cite{Tiskin11}). This modification significantly alters the spirit
of the BSP of serving as a model for distributed-memory machines, making
it very similar to shared-memory models like the LPRAM~\cite{AggarwalCS90}.
In fact, in a distributed-memory machine, the inputs might already be distributed
in some manner prior to the invocation of the algorithm, and the outputs are
usually left distributed in the processors' local memories at the end of the
execution, especially if the computation is a subroutine of a larger computation.
Thus, lower bounds that use this assumption, which essentially exploit this
``hack'' to guarantee that acquiring the $n$ input elements contributes
to the communication cost of algorithms (as some processor must read at least
$\lceil n/p \rceil$ input values), are not directly applicable to distributed-memory
architectures.

Other authors, within the original BSP model, assume specific distributions
of the input data. As we shall see later, it is usually assumed that the input
is initially \emph{evenly} distributed among the $p$ processors. However,
this apparently ``reasonable'' hypothesis is not part of the logic of the BSP
model. In fact, the physical distribution of input data across the processors
may depend on several factors, ranging from how the inputs get acquired to the
file system policies. Moreover, this hypothesis may lead to unsatisfactory
communication lower bounds. Consider, e.g., the computation of a directed
acyclic graph (DAG) with ``few'' input nodes and a ``long'' critical path.
In this case, naive algorithms which entrust the whole computation to one
processor might be communication optimal. This is misleading, since it
steers towards algorithms which are not parallel at all.

One possibility to overcome both the issues discussed above is to require,
in place of the even distribution of the inputs and of the presence of an
external memory, that algorithms exhibit some level of \emph{load balancing}
of the computation. Typically, if $\mathcal{W}$ denotes the total work
required by any algorithm to solve the given problem, it is required that each
processor performs $\BO{\mathcal{W}/p}$ elementary computations. However,
this way we are assuming, but not proving, that optimal solutions balance
computation. In fact, in general there is a tradeoff between computation costs
and communication costs. Some papers (see, e.g.,~\cite{PapadimitriouU87,WuK91})
quantify such tradeoffs by establishing lower bounds on the communication cost
of any algorithm as a function of its computation time. Nevertheless, results
of this kind usually indicate that the higher lower bounds on communication
correspond only to perfectly (to within constant factors) work-balanced computations,
and such bounds are tight since achieved by balanced algorithms. This leaves open
the possibility that a substantial saving on communication costs could actually
be achieved at a price of a small unbalance of the computation loads. 

Another common assumption is putting an upper bound on the \emph{size} of
processors' local memories. However, current technological advances allow
to build cheap memory and storage devices that, for many applications, allow
a single machine to store the whole input data set and the intermediate data.
Moreover, results derived under this assumption are less general than results
that put no limits on the amount of storage available to processors; indeed,
lower bounds are relatively easier to establish, as the model essentially
becomes a parallel version of the standard external memory (EM) model for
sequential computations, for which much more results and techniques are
known (see, e.g.,~\cite{HongK81,AggarwalV88}).

In contrast, lower bounds presented in this paper do not hinge on any of
the above assumptions. We develop new lower bounds for a number of key
computational problems, namely standard matrix multiplication, stencil
computations, comparison sorting, and the Fast Fourier Transform, using
the weak assumption that no processor performs more than a constant
fraction of the total required work. This requires more involved arguments,
and substantially strengthen previous work on communication lower bounds
for distributed-memory computations.

\paragraph*{The model.}
The \emph{Bulk Synchronous Parallel} (BSP) model of computation was introduced
by Valiant~\cite{Valiant90} as a bridging model for general-purpose parallel
computing. The architectural component of the model consists of $p$ processing
elements $P_0,P_1,\dots,P_{p-1}$, each equipped with an unbounded local memory,
interconnected by a communication medium. The execution of a BSP algorithm
consists of a sequence of \emph{supersteps}, where each processor can perform
operations on data in its local memory, send messages and, at the end, execute
a global synchronization. The running time of the $i$-th superstep is expressed
in terms of two parameters, $g$ and $\ell$, as $T_i = w_i + h_i g + \ell$, where
$w_i$ is the maximum number of local operations performed by any processor, and
$h_i$ is the maximum number of messages sent or received by any processor.
The \emph{running time} $T_{\mathcal A}$ of a BSP algorithm $\mathcal A$
is the sum of the times of its supersteps and can be expressed as
$W_{\mathcal A} + H_{\mathcal A}g + S_{\mathcal A}\ell$, where $S_{\mathcal A}$
is the number of supersteps, $W_{\mathcal A} = \sum_{i=1}^{S_{\mathcal A}} w_i$
is the \emph{local computation complexity}, and $H_{\mathcal A} =
\sum_{i=1}^{S_{\mathcal A}} h_i$ is the \emph{communication complexity}.

\paragraph*{Previous work.}
The complexity of communication on various models of computation has received
considerable attention. Lower bounds are often established through adaptations
of the techniques of Hong and Kung~\cite{HongK81} for hierarchical memory,
or by critical path arguments, such as those in~\cite{AggarwalCS90}.
For applications of these and other techniques
see~\cite{PapadimitriouU87,AggarwalV88,Savage95,Goodrich99,BilardiP99,BilardiPD00,IronyTT04,RanjanSZ11,BallardDHS11,BilardiSS12,BallardDHS12}
as well as~\cite{Savage98} and references therein. In the following, we discuss previous
work on lower bounds for the communication complexity of the problems studied in
this paper.

A standard computational problem is the multiplication of two $n \times n$
matrices. For the classical $\BT{n^3}$ algorithm, an $\BOM{n^2/p^{2/3}}$ lower
bound has been previously derived for the BSP~\cite{Tiskin98} and the
LPRAM~\cite{AggarwalCS90}. However, both results hinge on the hypothesis that
the input initially resides outside the processors' local memories and thus
must be read, contributing to the communication complexity of the algorithms.
As such, these results are an immediate consequence of a result of~\cite{HongK81}
(then restated in~\cite{IronyTT04}) which, loosely speaking, bounds from above
the amount of computation that can be performed with a given quantity of data.
When input is assumed to be initially evenly distributed across the $p$ processors'
local memories, the same lower bound is claimed in~\cite{CheathamFSV95}. Recently,
Ballard et al.~\cite{BallardDHLS12} obtained a result of the same form by assuming
perfectly balanced (to within constant factors) computations, and disallowing any
initial replication of inputs. Restricting to balanced computations allows to reduce
to the situation where inputs are evenly distributed: in fact, it is easy to prove
that, given a problem on $n$ inputs and which requires $N$ operations to be solved,
if each processor performs at most $\alpha N/p$ operations, for some $\alpha$ with
$1 \leq \alpha \leq p/2$, then there exists one processor which initially holds at
most $2\alpha n/p$ inputs, and that performs at least $\alpha/(2\alpha-1) \cdot N/p$
operations. The very same bound was found also by Irony et al.~\cite{IronyTT04},
who restrict their attention to computations that take place on machines
where processors' local memory size is assumed to be $M = \BO{n^2/p^{2/3}}$.
Finally, Solomonik and Demmel~\cite{SolomonikD11} investigate tradeoffs between
input replication and communication complexity (see also~\cite{BallardDHS11}).

A class of computations ubiquitous in scientific computing is that of stencil
computations, where each computing node in a multi-dimensional grid is updated
with weighted values contributed by neighboring nodes. These computations include
the \emph{diamond} DAG in the two-dimensional case and the \emph{cube} DAG in
three dimensions. For the former, Papadimitriou and Ullman~\cite{PapadimitriouU87}
present a communication-time tradeoff which yields a tight $\BOM{n}$ lower bound
on the communication complexity only for the case of balanced computations.
Aggarwal et al.~\cite{AggarwalCS90} extend this result to all algorithms whose
computational complexity is within a constant factor of the number of nodes of
the DAG. To the best of our knowledge, this is the sole example of a tight lower
bound that holds under the same hypothesis used in this paper. By generalizing
the technique in~\cite{PapadimitriouU87}, Tiskin~\cite{Tiskin98} establishes a
tight bound for the cube DAG, and claims its extension to higher dimensions.
However, this results only hold when the computational load is balanced among
the $p$ processors.

Another key problem is sorting. Many papers assume that the $n$ inputs initially
reside outside processors' local memories, thus obtaining an $\BOM{n/p}$ lower
bound which turns out to be tight when it is additionally assumed that problem
instances have sufficient \emph{slackness}, that is, $n >> p$ (e.g., $p^2 \leq n$
is a common assumption). Under some technical assumptions, a bound of the form
$\BOM{n \log n/ (p \log(n/p))}$, which is tight for all values of $p \leq n$,
was first given within the LPRAM model~\cite{AggarwalCS90}.\footnote{For
notational convenience, within asymptotic notations we will henceforth use
$\log x$ to denote $\max \{1,\log_2 x\}$.} This bound, however, includes the
cost to read the input from the shared memory. A similar lower bound was derived
later by Goodrich~\cite{Goodrich99} within the BSP model, but the result holds
only for the subclass of algorithms performing supersteps of degree $h = \BT{n/p}$,
and when the inputs are evenly distributed among the processors.

Previous work on the communication required to compute an FFT DAG of size $n$
is similar to previous work for sorting. By exploiting the property that, as
shown in~\cite{WuF81}, the cascade of three FFT networks has the topology of
a full sorting network, the aforementioned lower bounds for sorting also hold
for the FFT DAG. In a recent paper~\cite{BilardiSS12}, we obtain the same result
assuming that the maximum number of outputs held by any processor at the end of
the algorithm is at most $n/2$, and without assumptions on the distribution of
the input and of the computational loads; while these hypotheses are not equivalent
to the one we are using in this paper, the result in~\cite{BilardiSS12} is the
closest to the one that we will develop in Section~\ref{sec:FFT}.

\paragraph*{Our contribution.}

In this paper we present lower bounds on the communication complexity
required by key computational problems such as standard matrix multiplication,
stencil computations, comparison sorting, and the Fast Fourier Transform,
when solved by parallel algorithms on the BSP model. These results, which
are all tight for the whole range of model parameters, rely solely on the hypothesis
that no processor performs more than a \emph{constant} fraction of the total
required work. More formally, let $\mathcal{W}$ be the total work required by
any algorithm to solve the given problem (if the problem is represented by a
directed acyclic graph, then $\mathcal{W}$ is the number of nodes of the DAG,
otherwise $\mathcal{W}$ is a lower bound on the computation time required by
any sequential algorithm), and let $W$ be the maximum amount of work performed
by any BSP processor; then, in the same spirit of the aforementioned result
for the diamond DAG in~\cite{AggarwalCS90}, $W$ is assumed to satisfy the
bound $W \leq \e \mathcal{W}$, for some constant $\e \in (0,1)$. The rationale
behind this approach is that communication is the major bottleneck of a
distributed-memory computation unless the latter is sequential or ``nearly
sequential'', in which case the main contribution to the running time $T$ of
an algorithm comes from computation. Since it is directly linked to the running
time metric, and it does not allow for any other restrictive assumptions suggested
by orthogonal constraints, we believe that this is the right approach to perform
a systematic analysis of the communication requirements of distributed-memory
computations.

We emphasize that, in contrast to previous work, our lower bounds do not count the
communication required to acquire the input, allow for any initial distribution
of the input among the processors' local memories, assume no upper bound on the
sizes of the latter, and do not require computations to be balanced. On the other
hand, some of our results make use of additional technical assumptions, such as
the non-recomputation of intermediate results in the course of the computation,
or some restrictions on the replication of input data. Such restrictions, however,
were already in place in almost all of the corresponding state-of-the-art lower
bounds.

\section{Matrix Multiplication}\label{sec:mmult}

In this section we consider the problem of multiplying two $n \times n$ matrices,
$A$ and $B$, using only semiring operations, that is, addition and multiplication.
Hence, each element $c_{i,j}$ of the output matrix $C$ is an explicit sum of
products $a_{i,k} \cdot b_{k,j}$, which are called \emph{multiplicative terms}.
This rules out, e.g., Strassen's algorithm~\cite{Strassen69} and the Boolean
matrix multiplication algorithm of Tiskin~\cite{Tiskin98a}. As shown in~\cite{Kerr70},
any algorithm using only semiring operations must compute at least $n^{3}$ distinct
multiplicative terms.

In this section we establish a lower bound on the communication complexity
of any parallel algorithm for matrix multiplication on a BSP with $p$ processors.
This result is derived assuming that no processor performs more than a
constant fraction of the $n^{3}$ total work  required by any algorithm,
measured as the number of scalar multiplications, and that each input element is
initially stored in the local memory of exactly one processor. The bound has the
form of $\BOM{W^{2/3}}$, where $W$ is the maximum number of multiplicative terms
evaluated by a processor, and is tight for all values of $p$ between two and $n^2$.
The argument trough which we establish such a result is a repeated application
of a ``bandwidth'' argument which, loosely speaking, is as follows. Consider a
processor which performs the maximum amount of work. If this processor initially
holds ``few'' input values, then, since it computes at least $n^{3}/p$ multiplicative
terms, it must receive ``many'' inputs from the submachine including the other
processors; otherwise, if it initially holds ``many'' inputs, then it has to send
many of them to the other processors, because it cannot perform too much work on
its own, and thus the other processors have to perform at least a constant fraction
of the total work. The lower bound applies to any distribution of input and output 
matrices, and only requires that the input matrices are not initially replicated.

Towards this end, we first establish a lower bound of $\BOM{n^2}$ under the
same hypotheses outlined above for two processors. This result is derived using
a bandwidth argument that bounds from below the amount of data that must travel
across the communication network of a two-processor machine. A bound of the same
form can be found in~\cite[Section~6]{IronyTT04}, which holds only when the
elements of the input matrices $A$ and $B$ are evenly, or almost evenly, distributed
among the two processors. Our result, which instead allows any initial distribution
of the input matrices (without replication), establishes the same bound by using
a mild hypothesis on the maximum computation load faced by the processors.

\begin{lemma}\label{lem:bandwidth}
Let $\A$ be any algorithm for computing the matrix product $C=AB$, using only semiring
operations, on a BSP with two processors. If each processor computes at most $\epsilon
n^{3}$ multiplicative terms, where $\epsilon$ is an arbitrary constant in $(1/2,1)$,
and the input matrices are not initially replicated, then the communication complexity
of the algorithm is 
\[
H_{\A}(n,p) = \BOM{n^2}.
\]
\end{lemma}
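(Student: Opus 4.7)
The plan is to use a Loomis--Whitney-style bandwidth argument that simultaneously charges $A$-inputs, $B$-inputs, and partial output contributions against the per-processor communication budget. For $i \in \{0,1\}$, let $T_i \subseteq \{1,\ldots,n\}^3$ denote the set of triples $(p,q,r)$ such that $P_i$ computes the multiplicative term $a_{p,r} b_{r,q}$. Since each of the $n^3$ distinct multiplicative terms must be computed by some processor, $T_0 \cup T_1 = \{1,\ldots,n\}^3$; combined with $|T_i| \leq \e n^3$ this also forces $|T_i| \geq (1-\e) n^3$ for both $i$.

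I would then apply the Loomis--Whitney inequality to each $T_i$, obtaining $|T_i|^2 \leq \alpha_i \beta_i \gamma_i$, where $\alpha_i,\beta_i,\gamma_i$ are the cardinalities of the projections of $T_i$ onto the $(p,r)$-, $(q,r)$-, and $(p,q)$-coordinate planes, respectively. The AM--GM inequality then rewrites this as $|T_i|^{2/3} \leq (\alpha_i + \beta_i + \gamma_i)/3$. The three projections admit clean operational interpretations: $\alpha_i$ is at most the number of distinct $A$-entries that ever reside in $P_i$'s memory, $\beta_i$ the analogous count for $B$, and $\gamma_i$ the number of outputs $c_{p,q}$ for which $P_i$ computes at least one contributing multiplicative term.

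The crux is charging each of $\alpha_i,\beta_i,\gamma_i$ against the communication budget. Since inputs are not initially replicated, $\alpha_i \leq a_i + r_i^A$, where $a_i$ is the number of $A$-entries initially held by $P_i$ and $r_i^A$ the number received; similarly $\beta_i \leq b_i + r_i^B$. For the output projection, I would fix any designation $\phi(p,q) \in \{0,1\}$ of a processor that holds $c_{p,q}$ at the end, and set $C_i = \{(p,q) : \phi(p,q) = i\}$, so $|C_0| + |C_1| = n^2$. Whenever $(p,q)$ appears in the $(p,q)$-projection of $T_i$ but $\phi(p,q) \neq i$, $P_i$ must forward its partial contribution for $c_{p,q}$ out of its memory, giving $\gamma_i \leq |C_i| + s_i^P$, where $s_i^P$ counts such forwarded contributions. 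The critical observation is that $A$-sends, $B$-sends, and output-contribution sends by a processor all compete for the \emph{same} send budget of at most $H_{\A}$ messages; summing the three inequalities above over $i$ and using $a_0 + a_1 = b_0 + b_1 = |C_0| + |C_1| = n^2$ yields
\[
    (\alpha_0 + \beta_0 + \gamma_0) + (\alpha_1 + \beta_1 + \gamma_1) \;\leq\; 3 n^2 + 2 H_{\A}.
\]
Summing the resulting per-processor bound $|T_i|^{2/3} \leq (\alpha_i + \beta_i + \gamma_i)/3$ over $i$ gives $|T_0|^{2/3} + |T_1|^{2/3} \leq n^2 + \tfrac{2}{3} H_{\A}$. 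On the other hand, the concavity of $t \mapsto t^{2/3}$ implies that, subject to $|T_0| + |T_1| \geq n^3$ and $|T_i| \leq \e n^3$, the left-hand side is minimized at the corner $(|T_0|,|T_1|) = (\e n^3, (1-\e) n^3)$ and is therefore at least $\bigl(\e^{2/3} + (1-\e)^{2/3}\bigr) n^2$. Since $x^{2/3} > x$ for $x \in (0,1)$, we have $\e^{2/3} + (1-\e)^{2/3} > 1$, and rearranging produces $H_{\A} = \BOM{n^2}$ for every fixed $\e \in (1/2,1)$.

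The step I expect to be most delicate is the accounting for $\gamma_i$: one must verify that every output position ``half-computed'' by $P_i$ but owned by $P_{1-i}$ costs at least one distinct message sent by $P_i$, and that these output-related sends are drawn from the same budget as the raw $A$- and $B$-transmissions rather than being counted separately. The side issue of output duplication (the same $c_{p,q}$ being held by both processors at the end) is handled by fixing the owner function $\phi$ above, since any additional copies of an output require further sends that only strengthen the bound.
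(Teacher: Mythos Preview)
Your argument is correct and takes a genuinely different route from the paper. The paper splits on the number $K$ of output entries whose $n$ multiplicative terms are \emph{not} all computed by a single processor: if $K=\BOM{n^2}$ then each such entry forces a partial-sum message to its owner; otherwise most entries are fully computed by one processor, and a direct optimization on the counts $r_i,c_i$ of fully-accessed $A$-rows and $B$-columns (using $r_ic_i\ge n_i$) shows that either $r_0+r_1$ or $c_0+c_1$ exceeds $n$ by a constant factor, so one input matrix has $\Omega(n)$ rows or columns accessed by both processors and hence $\Omega(n^2)$ input transfers. You bypass this case analysis entirely: Loomis--Whitney plus AM--GM collapse everything into the single inequality $|T_0|^{2/3}+|T_1|^{2/3}\le n^2+\tfrac{2}{3}H_{\A}$, and strict concavity of $t\mapsto t^{2/3}$ converts the work constraint $|T_i|\le\e n^3$ directly into $H_{\A}\ge\tfrac{3}{2}\bigl(\e^{2/3}+(1-\e)^{2/3}-1\bigr)n^2$. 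This is more uniform and makes the $\e$-dependence explicit in one line; the paper's argument, in return, actually localizes the communication in shared rows of $A$ or shared columns of $B$.

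One point to tighten, precisely at the step you flag as delicate: the bound $\gamma_i\le|C_i|+s_i^P$ can fail as stated if $P_i$ computes a multiplicative term for $(p,q)$ that is never incorporated into $c_{p,q}$ (because $P_{1-i}$ recomputes it and uses its own copy, so $P_i$ need not send anything for that position). The fix is to assume without loss of generality that no useless multiplicative terms are computed---deleting them only lowers each processor's term count and cannot increase $H_{\A}$; the paper makes exactly this move in the proof of Theorem~\ref{thm:mm-lb}. Under the semiring restriction (no subtraction available) this forces each of the $n^3$ terms to be computed exactly once, so $T_0$ and $T_1$ in fact partition $[n]^3$, and your charging of $\gamma_i$ then goes through verbatim.
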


\begin{proof}
We use a bandwidth argument as the one employed in~\cite[Theorem~6.1]{IronyTT04}.
By hypothesis, each processor computes at most $\e n^{3}$ multiplicative
terms. Let $K$ be the number of elements of $C$ whose corresponding multiplicative
terms have \emph{not} been totally computed by the same processors. If $K \geq
(1-\e)n^2/2$, then the communication complexity is at least $\BOM{n^2}$ since a
processor receives a message for at least $K/2$ of such entries containing a
multiplicative term or a partial prefix sum.

Suppose now that $K < (1-\e)n^2/2$. Then there are at least $(1+\e)n^2/2$
entries of $C$ whose $n$ multiplicative terms have been entirely computed by
the same processor. We denote with $n_0$ and $n_1$ the number of entries of
$C$ computed entirely by processor $\p_0$ and $\p_1$, respectively, and suppose
without loss of generality that $n_0 \geq n_1$. Clearly, $n_0+n_1 \geq (1+\e)n^2/2$.
Since $n_0 \geq n_1$ and since each processor can compute at most $\e n^{3}$
multiplicative terms, it follows that $n_0 \leq \e n^2$, and thus $n_1\geq (1-\e)n^2/2$.
Let $r_i$ and $c_i$ denote the number of rows of $A$ and columns of $B$, respectively,
whose ${n}$ entries have all been accessed by processor $\p_i$, with $i \in \{0,1\}$,
during the lifespan of the algorithm. Since $n_i$ entries of $C$ are computed entirely
by processor $\p_i$, then we have $r_i c_i \geq n_i$. Let $\alpha = \sqrt{\e}+\sqrt{(1-\e)/2}$;
we observe that $\alpha\in(1/2+1/\sqrt{2},1)$ since $\epsilon \in (1/2,1)$. If
$r_0+r_1 \geq \alpha {n}$, at least $(\alpha-1)n = \mu n$ rows of $A$, where $\mu$ is a
suitable constant in $(0,1/\sqrt{2}-1/2)$, are used by both processors, incurring $\BOM{n^2}$
messages for exchanging the rows since, by hypothesis, the input matrices are not initially
replicated. Suppose now that $r_0 + r_1 < \alpha {n}$. Then, we have 
\[
c_0 + c_1 \geq \frac{n_0}{r_0} + \frac{n_1}{r_1} \geq 
\frac{n_0}{r_0} + \frac{n_1}{\alpha {n}-r_0}.
\]
By taking the derivative of the last term with respect to $r_0$ we see that the
right-hand side is minimized at $r_0 = {\alpha n \sqrt{n_0}}/{(\sqrt{n_0}+\sqrt{n_1})}$,
whence
\[
c_0 + c_1 \geq \frac{\left(\sqrt{n_0}+\sqrt{n_1}\right)^2}{\alpha {n}}.
\]
Since $n_0+n_1 \geq (1+\e)n^2/2$ and $n_0 \leq \e n^2$, the term
$\sqrt{n_0} + \sqrt{n_1}$ is minimized when $n_0$ assumes the largest
allowed value (i.e., $\e n^2$) and $n_0+n_1 = (1+\e)n^2/2$. Thus, we have
\[
c_0 + c_1 \geq \frac{\left(\sqrt{\e n^2}+\sqrt{(1-\e)n^2/2}\right)^2}{\alpha n} = \alpha{n}.
\]

The lemma follows since $(\alpha-1){n}=\BT{{n}}$ columns of $B$ are used by both
processors, entailing $\BOM{n^2}$ messages for exchanging them.
\end{proof}

To prove our main result we also need the following technical lemma, which was
first given by Hong and Kung in their seminal paper on I/O complexity, and then
restated in~\cite{IronyTT04} by applying the Loomis-Whitney inequality~\cite{LoomisW49}.

\begin{lemma}[{\cite[Lemma~6.1]{HongK81}}]\label{lem:mmult}
Consider the matrix multiplication of two $n \times n$ matrices $A$ and $B$
using scalar additions and multiplications only. During the computation,
if a processor accesses at most $K$ elements of each input matrices and
contributes to at most $K$ elements of the output matrix $C$, then it can
compute at most $2K^{3/2}$ multiplicative terms.
\end{lemma}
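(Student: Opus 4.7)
The plan is to identify the set of multiplicative terms computed by the processor with a subset of the three-dimensional integer lattice, and then apply the Loomis--Whitney inequality, following the strategy suggested by the paper's parenthetical hint.

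First I would set up the geometric encoding. Every multiplicative term produced during the algorithm has the form $a_{i,k}\cdot b_{k,j}$ for some triple $(i,j,k)\in\{1,\dots,n\}^3$, and conversely each such triple corresponds to at most one term. Thus the collection $T$ of multiplicative terms computed by the given processor can be identified with a finite set $T\subseteq\{1,\dots,n\}^3$. This reduction is the key observation: the three pairs of coordinates of a triple $(i,j,k)$ encode respectively the output entry the term feeds ($c_{i,j}$), the $A$-entry used ($a_{i,k}$), and the $B$-entry used ($b_{k,j}$).

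Next I would control the three coordinate-plane projections of $T$ using the hypotheses. The projection $\pi_{ij}(T)$ consists of the indices $(i,j)$ of output entries that the processor contributes to, and by assumption $|\pi_{ij}(T)|\le K$. Similarly $\pi_{ik}(T)$ is contained in the set of $A$-entries the processor accesses, so $|\pi_{ik}(T)|\le K$, and $\pi_{jk}(T)$ is contained in the set of accessed $B$-entries, so $|\pi_{jk}(T)|\le K$. The Loomis--Whitney inequality in three dimensions then gives
\[
|T|^{2}\;\le\;|\pi_{ij}(T)|\cdot|\pi_{ik}(T)|\cdot|\pi_{jk}(T)|\;\le\;K^{3},
\]
whence $|T|\le K^{3/2}$, which is even stronger than the claimed $2K^{3/2}$.

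The argument has essentially no hard step once the lattice-point encoding is in place; the main subtlety I would double-check is the soundness of identifying ``contributes to an output entry $c_{i,j}$'' with appearing in $\pi_{ij}(T)$ when partial sums are allowed to be shipped around (i.e., the processor might finish only a partial sum for $c_{i,j}$). Since the hypothesis is phrased as ``contributes to at most $K$ elements of $C$'', any term counted in $T$ belongs to one of those $K$ output columns of the projection, so the bound on $|\pi_{ij}(T)|$ still holds. The looser constant $2K^{3/2}$ in the statement simply leaves some slack — presumably inherited from the original red-blue pebble-game proof of Hong and Kung — which is harmless for our purposes.
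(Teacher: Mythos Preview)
Your proof is correct, and it is precisely the Loomis--Whitney argument the paper alludes to: the paper does not prove this lemma itself but cites it from Hong and Kung, noting that it was ``restated in~\cite{IronyTT04} by applying the Loomis-Whitney inequality.'' Your observation that the projection bound $|\pi_{ij}(T)|\le K$ holds even when the processor only produces partial sums for some $c_{i,j}$ is exactly right, and the sharper constant $K^{3/2}$ you obtain (versus the stated $2K^{3/2}$) is the well-known improvement that the Loomis--Whitney route gives over Hong and Kung's original pebble-game argument.
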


Now we have all the tools to prove the main result of this section. The following
theorem establishes an $\BOM{W^{2/3}}$ lower bound to the communication complexity
of any standard algorithm, where $W$ denotes the maximum number of multiplicative
terms evaluated by a processor. By the result of~\cite{Kerr70} and by the pigeonhole
principle, there exists a processor that computes at least $n^3/p$ multiplicative
terms, from which the standard $\BOM{n^2/p^{2/3}}$ lower bound follows.

\begin{theorem}\label{thm:mm-lb}
Let $\A$ be any algorithm for computing the matrix product $C = AB$, using only
semiring operations, on a BSP with $p$ processors, where $1 < p \leq n^2$, and
let $W$ be the maximum number of multiplicative terms evaluated by a processor.
If $W \leq \max\{n^3/p, n^3/11^3\}$, and the input matrices are not initially
replicated, then the communication complexity of the algorithm is
\[
H_{\A}(n,p) = \BOM{W^{2/3}}.
\]
\end{theorem}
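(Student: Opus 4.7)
The plan is to focus on a processor $\p^*$ performing the maximum number $W$ of multiplicative terms and to lower bound the messages crossing the cut that separates $\{\p^*\}$ from the remaining processors (collectively denoted $T$); this cross-cut traffic is a lower bound on $H_{\A}$. Applying Lemma~\ref{lem:mmult} to $\p^*$, if $K_A$, $K_B$, $K_C$ denote the numbers of distinct elements of $A$, $B$, $C$ that $\p^*$ accesses ($K_C$ being those it contributes to), then $\max(K_A,K_B,K_C) \geq (W/2)^{2/3}$. I would split into three cases according to which one is large.

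\emph{Output case.} If $K_C \geq (W/2)^{2/3}$, note that an output entry computed entirely by $\p^*$ consumes $n$ of its multiplicative terms, so $\p^*$ fully computes at most $W/n$ entries. Hence at least $K_C - W/n$ of the $K_C$ entries that $\p^*$ touches are \emph{mixed} between $\p^*$ and $T$, and each forces at least one cross-cut message to combine partial sums. The hypothesis $W \leq \max\{n^3/p, n^3/11^3\}$ implies $W^{1/3} \leq n/11$ in either regime, and hence $W/n \leq W^{2/3}/11$, so the bound reduces to $\BOM{W^{2/3}}$.

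\emph{Input case.} Without loss of generality $K_A \geq (W/2)^{2/3}$ (the $K_B$ subcase being symmetric). Let $a^*$ be the number of $A$-elements initially held by $\p^*$. If $a^* \leq K_A/2$, then $\p^*$ must receive $\BOM{W^{2/3}}$ $A$-elements from $T$. Otherwise $a^* \geq (W/2)^{2/3}/2$, and I mimic the bandwidth argument of Lemma~\ref{lem:bandwidth}: let $K$ be the number of output entries mixed between $\p^*$ and $T$. If $K \geq c_1 W^{2/3}$ for a small constant $c_1$, the cross-cut is already $\BOM{W^{2/3}}$. Otherwise the number of entries computed entirely by $T$ satisfies $n_T \geq n^2 - K - W/n \geq n^2 - c_1 W^{2/3} - W^{2/3}/11$; these entries lie in $R_T \times C_T$, where $R_T$ and $C_T$ are the rows of $A$ and columns of $B$ accessed by $T$, so $|R_T| \geq n_T/n$ and at most $n(n-|R_T|) \leq c_1 W^{2/3} + W^{2/3}/11$ elements of $A$ fall outside the rows in $R_T$. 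Since inputs are not replicated, $\p^*$ holds at least $a^* - c_1 W^{2/3} - W^{2/3}/11$ of its $a^*$ $A$-elements inside $R_T$, and $T$ must receive all of them from $\p^*$; choosing $c_1$ small enough so that $(W/2)^{2/3}/2 - c_1 W^{2/3} - W^{2/3}/11$ stays a positive constant multiple of $W^{2/3}$ yields $\BOM{W^{2/3}}$ cross-cut messages.

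The main obstacle is the constant chasing across cases: each branch accumulates additive slack of order $W^{2/3}$ or $W/n$, and the thresholds $K_A/2$ and $c_1 W^{2/3}$ must be tuned so that a common positive constant times $W^{2/3}$ survives in every case. This is precisely the role of the otherwise puzzling condition $W \leq \max\{n^3/p, n^3/11^3\}$: it forces $W/n \leq W^{2/3}/11$, providing just enough headroom to absorb the slack without annihilating the main $\BOM{W^{2/3}}$ contribution in any branch of the case analysis.
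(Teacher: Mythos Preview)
Your argument has a genuine gap stemming from a misreading of the hypothesis. You assert that $W \le \max\{n^3/p,\, n^3/11^3\}$ implies $W^{1/3}\le n/11$ ``in either regime''. But when $p < 11^3$ the maximum is $n^3/p$, and the hypothesis only gives $W\le n^3/p$; for instance with $p=2$ one may have $W=n^3/2$, so $W/n = n^2/2$ while $(W/2)^{2/3}\approx 0.40\,n^2$. In that situation $K_C - W/n$ is negative and your output case yields nothing; the same overflow kills the second input subcase, since $(W/2)^{2/3}/2 - c_1 W^{2/3} - W/n$ is again negative. So for every $p<11^3$ your proof is incomplete.

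The paper's proof avoids this in two ways. First, its primary case split is on $I$ (the number of inputs initially held by the maximal processor), not on which of $K_A,K_B,K_C$ is large. When $I\le W^{2/3}/5$ the output side is handled by partitioning the touched $C$-entries into $G_1$ (fully computed), $G_2$ (produced with help), $G_3$ (not produced); if $|G_1|$ is large then those $n|G_1|$ multiplicative terms already force $\Omega(|G_1|)$ distinct input accesses, and the small-$I$ assumption converts that into received messages---no subtraction of $W/n$ is needed, so the argument is valid for all $p$. Second, and this is the piece you are missing entirely, when $I>W^{2/3}/5$ and $p<11^3$ the paper abandons the single-processor cut and instead partitions the $p$ processors into two halves, each computing at most $n^3/2$ terms, and invokes Lemma~\ref{lem:bandwidth}. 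That two-processor bandwidth lemma is precisely what rescues the small-$p$ regime, and nothing in your write-up substitutes for it. For $p\ge 11^3$ your approach is close in spirit to the paper's and essentially works; but you need a separate argument for $p<11^3$.
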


\begin{proof}
Without loss of generality, we assume that any multiplicative term computed by
the processors is actually used towards the computation of some entry of the
output matrix $C$ (that is, processors do not perform ``useless'' computations).
Consider one of the processors that compute $W$ multiplicative terms, and without
loss of generality let $\p_0$ denote such a processor. Let $I$ be the number
of input elements initially held by this processor in its local memory.

Consider first the case $I \leq W^{2/3}/5$. By Lemma~\ref{lem:mmult}, a
processor that computes $W$ multiplicative terms either accesses, during the
whole execution of algorithm $\A$, at least $(W/2)^{2/3}$ input elements, or
computes multiplicative terms relative to at least $(W/2)^{2/3}$ elements of
the output matrix. In the first case, since $\p_0$ initially holds $I \leq W^{2/3}/5$
input elements, it must receive at least $(W/2)^{2/3} - I = \BOM{W^{2/3}}$ data
words from other processors,  and the theorem follows. On the other hand, suppose
$\p_0$ computes multiplicative terms relative to $(W/2)^{2/3}$ entries of the output
matrix, and partition such entries into three groups: $G_1$, the set of entries whose
multiplicative terms have all been computed by the processor; $G_2$, the set of
entries produced by the processor but for which some multiplicative term or partial
sum has been communicated by some other processor; $G_3$, the set of entries not
produced by the processor. Clearly, at least one of these three groups must have
size at least $(W/2)^{2/3}/3$. If $\vert G_1 \vert \geq (W/2)^{2/3}/3$, then
$\p_0$ must have computed at least ${n}(W/2)^{2/3}/3$ multiplicative terms,
and since any entry of the input matrices occurs in only ${n}$ of such terms,
the processor must have received $(W/2)^{2/3}/3 - I = \BOM{W^{2/3}}$ elements
from other processors. If $\vert G_2 \vert \geq (W/2)^{2/3}/3$, then for each
entry in $G_2$ $\p_0$ has received some term from other processors, therefore
accounting for a total of $\BOM{W^{2/3}}$ incoming data words. Finally, if
$\vert G_3 \vert \geq (W/2)^{2/3}/3$, then, since any multiplicative term must
be used towards the computation of some entry of the output matrix $C$, for each
entry in $G_3$, $\p_0$ must send some multiplicative term or partial sum to the
processor that will produce the corresponding entry of $C$, and this implies that
$\p_0$ must send $\BOM{W^{2/3}}$ data words. In all three cases, $\BOM{W^{2/3}}$
messages have to be exchanged by $\p_0$ with the other processors, and the claim
follows for $I \leq W^{2/3}/5$.

Now suppose $I > W^{2/3}/5$ and $p \geq 11^3$. Assume, without loss of generality,
that $\p_0$ initially holds at least $I/2$ elements of matrix $A$. Since any entry
of the input matrices occurs in ${n}$ multiplicative terms, there are at least
$In/2$ multiplicative terms that depend on the entries of $A$ initially held
by the processor. Since $W$ multiplicative terms are computed by the processor,
the remaining $In/2 - W$ ones are computed by other processors. Since, by hypothesis,
each entry of $A$ is initially non replicated and a processor can compute at most
$n$ multiplicative terms using a single entry of $A$, we have that $(In/2 - W)/n$
messages are required for sending the appropriate entries of $A$ to the processors
that will compute the remaining entries. Hence, $H_{\A}(n,p) \geq (In/2 - W)/n$.
Finally, observe that since $p \geq 11^3$, then by hypothesis it holds that
$W \leq n^{3}/11^3$. Putting all pieces together yields
\begin{align*}
H_{\A}(n,p) &\geq \frac{I n/2 - W}{n} \\
&> \frac{W^{2/3}}{10} - \frac{W}{n} \\
&= W^{2/3}\left(\frac{1}{10} - \frac{W^{1/3}}{n}\right) \\
&\geq W^{2/3}\left(\frac{1}{10} - \frac{1}{11}\right) \\
&= \frac{W^{2/3}}{110},
\end{align*}
which concludes the proof of the second case.

Finally, when $I > W^{2/3}/5$ and $p < 11^3$, the sought lower bound follows by
Lemma~\ref{lem:bandwidth}. Indeed, the $p$ processors can be virtually partitioned
into two subsets, each consisting of exactly $p/2$ processors; in particular,
processor $\p^*_0$ will be identified with the submachine including the first
half of the $p$ processors, and $\p^*_1$ with the submachine including the second
half. Since $p < 11^3$, by hypothesis each BSP processor computes at most $n^{3}/p$
multiplicative terms, and thus both $\p^*_0$ and $\p^*_1$ compute at most
$(n^{3}/p) (p/2) = n^{3}/2$ multiplicative terms overall. Hence we can apply 
Lemma~\ref{lem:bandwidth} to processors $\p^*_0$ and $\p^*_1$, obtaining the
desired result.
\end{proof}

The proposed bound is tight and is matched by the algorithm that decomposes the
problem into $n^3/W \leq p$ subproblems of size $W^{1/3} \times W^{1/3}$, and then
solves each subproblem sequentially in each round. Since $W \geq n^3/p$, the
minimum communication complexity is $\BOM{n^2/p^{2/3}}$, which is achieved by
the standard 3D algorithm~\cite{IronyTT04}.

Finally, we observe that the above theorem can be extended to the case $W \leq \e n^3$,
for an arbitrary constant $\e\in(0,1)$, as soon as each multiplicative term is computed
once. Also, we remark that, if each processor holds $\BO{W^{2/3}}$ inputs, our bound
applies even when each input element may be present in more than one processor at the
beginning of the computation. We also conjecture that our bound can be extended up
to $p^{1/3}$ replication, as shown in~\cite{SolomonikD11} assuming balanced memory
or work.

\section{Stencil Computations}\label{sec:stencil}

A \emph{stencil} defines the computation of an element in a $d-1$-dimensional
spatial grid at time $t$ as a function of neighboring grid elements at time
$t-1,\dots,t-\tau$, for some value $\tau \geq 1$ and constant $d > 1$ (see,
e.g.,~\cite{FrigoS05}). We provide an $\BOM{n^{d-1}/p^{(d-2)/(d-1)}}$ lower
bound to the communication complexity of any algorithm evaluating $n$ time
steps of a $d-1$-dimensional stencil. For simplicity we assume $\tau = 1$,
however our bounds still apply in the general case. The bound follows by
investigating the \emph{$(n,d)$-array problem}, which consists in evaluating
all nodes of a $d$-dimensional array DAG of size $n$. Indeed, the DAG given by the
$(d-1)$-dimensional grid plus the time dimension spans a $d$-dimensional spacetime
containing an $(n/2,d)$-array as a subgraph. A \emph{$d$-dimensional array DAG} has
$n^{d}$ nodes $\langle i_0,\dots,i_{d-1} \rangle$, for each $0 \leq i_0,\dots,i_{d-1}
< n$, and there is an arc from $\langle i_0,\dots,i_k,\dots,i_{d-1} \rangle$ to
$\langle i_0,\dots,i_k+1,\dots,i_{d-1} \rangle$, for each $0 \leq k < d$ and
$0 \leq i_0,\dots,i_{d-1} < n-1$. Observe that $\langle 0,\dots,0 \rangle$ and
$\langle n-1,\dots,n-1 \rangle$ are the single input and output nodes, respectively. 

Our result hinges on the restriction on the nature of the computation whereby each
vertex of the DAG is computed exactly once. In this setting, the crucial property
is that for each arc $(u,v)$ such that $u$ is computed by processor $P$ and $v$
is computed by processor $P'$, $P \neq P'$, there corresponds a message from $P$
to $P'$ (which may also cross other processors). Such arcs are referred to as
\emph{communication arcs}.

We now introduce some preliminary definitions, which will be used throughout
the section. We envision an $(n,d)$-array as partitioned into $p^{d/(d-1)}$ smaller
$d$-dimensional arrays, called \emph{blocks}, of size $n/p^{1/(d-1)}$, and
denote each block with $B_{i_0,\dots,i_{d-1}}$ for $0 \leq i_0,\dots,i_{d-1}
< p^{1/(d-1)}$. Block $B_{i_0,\dots,i_{d-1}}$ contains nodes
$\langle i'_0,\dots,i'_{d-1} \rangle$, for each $i_k n/p^{1/(d-1)} \leq i'_k <
(i_k+1) n/p^{1/(d-1)}$. A block has $n^d/p^{d/(d-1)}$ nodes, and is said
\emph{$\ell$-owned} if more than half of its nodes are evaluated by processor
$\p_\ell$, with $0 \leq \ell < p$. A block is \emph{owned} if there exists some
$\ell$, with $0 \leq \ell < p$, such that it is $\ell$-owned; it is \emph{shared}
otherwise. Two blocks $B_{i_0,\dots,i_{d-1}}$ and $B_{i'_0,\dots,i'_{d-1}}$
are said to be \emph{adjacent} if their coordinates differ in just one position
$k$ and $\vert i_k - i'_k \vert =1$ (i.e., they share a face).
For the sake of simplicity, we assume that $n$ and $p$ are powers of $2^{d-1}$ and thus the
previous values (e.g., $n/p^{1/(d-1)}$) are integral: since $d$ is a constant, this assumption is
verified by suitably increasing $n$ and decreasing $p$ by a constant factor which does not
asymptotically affect our lower bounds.
 
In order to establish our main lower bound, we need two preliminary lemmas.
The first one gives a slack lower bound based on the $d$-dimensional version
of the Loomis-Whitney geometric inequality~\cite{LoomisW49}, and reminds the
result of Theorem~\ref{thm:mm-lb} for matrix multiplication when $d=3$.

\begin{lemma}\label{lem:w}
Let $\A_d$ be any algorithm solving the $(n,d)$-array problem, without recomputation,
on a BSP with $p$ processors, where $1 < p \leq n^{d-1}$, and denote with $W$ the maximum
number of nodes evaluated by a processor. If $W \leq \epsilon n^d$, for an arbitrary
constant $\epsilon \in (0,1)$, then the communication complexity of the algorithm is
\[
H_{\A_d}(n,p) = \BOM{W^{(d-1)/d}}.
\]
\end{lemma}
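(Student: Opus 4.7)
My plan is to isolate the processor $\p^*$ evaluating the maximum number $W$ of nodes and to bound from below the communication arcs incident to its node set $V\subseteq\{0,\dots,n-1\}^d$ via a Loomis--Whitney style surface-area argument. For each direction $k\in\{0,\dots,d-1\}$, let $a_k=|\pi_k(V)|$ denote the size of the projection of $V$ onto the hyperplane obtained by dropping coordinate $k$. The Loomis--Whitney inequality yields $W^{d-1}=|V|^{d-1}\leq\prod_{k=0}^{d-1}a_k$, and AM--GM then gives $\sum_k a_k \geq d\, W^{(d-1)/d}$.

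Next I would count cut arcs, i.e., arcs of the DAG with exactly one endpoint in $V$. Call a maximal line of $n$ nodes that differ only in coordinate $k$ a \emph{$k$-fiber}. Exactly $a_k$ of the $k$-fibers meet $V$, and at most $W/n$ of them can be fully contained in $V$ (each such fiber contributes $n$ distinct points to $V$, and distinct $k$-fibers are disjoint). Hence at least $a_k-W/n$ of the $k$-fibers are only partially covered, and each of those contains a pair of $k$-adjacent nodes straddling the boundary of $V$, giving a distinct cut arc in direction $k$. Cut arcs in different directions are trivially distinct, so the total number of cut arcs incident to $V$ is at least
\[
\sum_{k=0}^{d-1}\bigl(a_k - W/n\bigr) \;\geq\; d\, W^{(d-1)/d} - d\, W/n.
\]
By the non-recomputation hypothesis, every vertex is evaluated by a unique processor, so each cut arc is a communication arc involving $\p^*$ and accounts for at least one unit in some superstep's $h_i$; summing over supersteps, $H_{\A_d}(n,p)$ is at least the displayed quantity.

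It then remains to absorb the error term into the main term. Using $W\leq\e n^d$ one gets $n\geq(W/\e)^{1/d}$ and therefore $W/n\leq\e^{1/d}\,W^{(d-1)/d}$, so the bound becomes $d\bigl(1-\e^{1/d}\bigr)\,W^{(d-1)/d}=\BOM{W^{(d-1)/d}}$, with a strictly positive constant since $\e<1$. The main obstacle is precisely the regime of $W$ close to $\e n^d$: there the ``fully contained fiber'' term $dW/n$ approaches the Loomis--Whitney lower bound $dW^{(d-1)/d}$, and the proof works only because the strict inequality $\e<1$ keeps the factor $1-\e^{1/d}$ bounded away from zero. A secondary point to check is that distinct partially-covered fibers (within or across directions) contribute genuinely distinct cut arcs, which holds automatically since every DAG arc is parallel to a single coordinate axis and therefore lies on a unique fiber.
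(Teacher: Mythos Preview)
Your proof is correct and follows essentially the same Loomis--Whitney/fiber-counting approach as the paper. The only difference is bookkeeping: the paper takes the single largest projection $n_0\geq W^{(d-1)/d}$ and handles the regime $W>n^d/2$ separately by swapping $\p_0$ with the virtual complement processor, whereas you sum all $d$ projections via AM--GM and absorb the fully-contained-fiber correction $dW/n$ directly using $W\leq\epsilon n^d$, which is slightly cleaner since it avoids the case split.
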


\begin{proof}
Let $\p_0$ be a processor evaluating $W$ nodes, and suppose $W \leq n^d/2$.
Denote with $\Psi$ the set of nodes evaluated by $\p_0$, and with $N_i$ the
set of points obtained by dropping the $i$-th dimension from the set $\Psi$,
for each $0 \leq i < d$. Let $n_i = \vert N_i \vert$. Applying the discrete
Loomis-Whitney inequality~\cite{LoomisW49}, we have $W^{d-1} \leq \Pi_{i=0}^{d-1} n_i$,
and hence $\max\{n_0,\dots,n_{d-1}\} \geq W^{(d-1)/d}$. Assume, without loss of
generality, $n_0 \geq W^{(d-1)/d}$. Let the set $N'_0$ contain the points in
$N_0$ such that processor $\p_0$ evaluates all nodes defined by the associated
$1$-dimensional arrays: more formally, $\langle i_1,\dots,i_{d-1} \rangle \in N'_0$,
if $\langle i_1,\dots,i_{d-1} \rangle \in N_0$ and processor $\p_0$ evaluates
$\langle 0,i_1,\dots,i_{d-1} \rangle,\dots,\langle n-1,i_1,\dots,i_{d-1}\rangle$.
Let $n'_0 = \vert N'_0 \vert$. If $n'_0 > n_0/2^{1/d}$, we get
\[
W \geq n'_0 n > \frac{n_0 n}{2^{1/d}} \geq \frac{W^{(d-1)/d} n}{2^{1/d}} \geq W,
\]
where in the last inequality we have used the hypothesis that $W \leq n^d/2$.
This rises a contradiction, and thus it must be that $n'_0 \leq n_0/2^{1/d}$.
Then, there are $n_0 - n'_0 \geq (1-1/2^{1/d}) W^{(d-1)/d}$ points in $N_0$
whose respective $1$-dimensional arrays have not been completely evaluated by
$\p_0$: therefore, there is one communication arc associated to each array,
and the lemma follows. 

If $W > n^d/2$, we consider the remaining $p-1$ processors as a single virtual
processor evaluating $n^d-W \leq n^d/2$ nodes, being recomputation disallowed.
An argument equivalent to the previous one gives the claim.
\end{proof}

Now we need a second lemma that bounds from below the number of messages exchanged
by a processor $\p_\ell$ while evaluating nodes in an $\ell$-owned block and
in an adjacent block which is not $\ell$-owned. 

\begin{lemma}\label{lem:owned}
Consider an $\ell$-owned block $B$ adjacent to a shared or $\ell'$-owned block
$B'$, with $\ell \neq \ell'$. Then, the number of messages exchanged by
processor $\p_\ell$ for evaluating, without recomputation, nodes in $B$ and $B'$
is
\[
\BOM{\frac{n^{d-1}}{p}}.
\]
\end{lemma}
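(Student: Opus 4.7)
The plan is to count, via a three-case analysis, the communication arcs incident to $\p_\ell$ within $B\cup B'$; by the convention recalled just before Lemma~\ref{lem:w}, each such arc contributes a distinct message that $\p_\ell$ must send or receive. Set $m = n/p^{1/(d-1)}$, so each block has $m^d$ nodes and each face has $m^{d-1}$. Let $\Psi_B$ and $\Psi_{B'}$ denote the sets of nodes of $B$ and $B'$ evaluated by $\p_\ell$: by hypothesis $|\Psi_B| > m^d/2$ and $|\Psi_{B'}| \le m^d/2$. The three cases $|\Psi_B| \le (3/4)m^d$, $|\Psi_{B'}| \ge m^d/4$, and the extremal $\{|\Psi_B| > (3/4)m^d,\,|\Psi_{B'}| < m^d/4\}$ cover all possibilities, and in each I produce $\Omega(m^{d-1})=\Omega(n^{d-1}/p)$ communication arcs incident to $\p_\ell$.

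The first two regimes are dispatched by the Loomis-Whitney argument used in the proof of Lemma~\ref{lem:w}, replayed on a sub-DAG. When $|\Psi_B| \le (3/4)m^d$, applying that argument to $\Psi_B \subseteq B$ (with $B$ viewed as an $m^d$-cube, and with $\epsilon = 3/4$) produces an axis along which $\Omega(|\Psi_B|^{(d-1)/d})=\Omega(m^{d-1})$ one-dimensional lines of $B$ contain both $\p_\ell$-nodes and non-$\p_\ell$-nodes, each therefore contributing a communication arc incident to $\p_\ell$. When $|\Psi_{B'}| \ge m^d/4$, applying the same argument to $\Psi_{B'} \subseteq B'$ (with $\epsilon = 1/2$) produces $\Omega(|\Psi_{B'}|^{(d-1)/d})=\Omega(m^{d-1})$ communication arcs incident to $\p_\ell$ inside $B'$. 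The no-recomputation hypothesis is used to guarantee that nodes outside $\Psi_B$ (resp.\ $\Psi_{B'}$) are genuinely processed by some processor other than $\p_\ell$, so that the identified transitions are genuine communication arcs.

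To handle the extremal third case, in which $\p_\ell$ dominates $B$ and essentially avoids $B'$, I use a column argument along the $m^{d-1}$ one-dimensional lines that traverse $B\cup B'$ perpendicular to their shared face. For each such column $c$, let $a_c$ and $b_c$ denote the numbers of $\p_\ell$-nodes in its $B$-part and $B'$-part, each at most $m$. Since $\sum_c a_c = |\Psi_B| > (3/4)m^d$, more than $(3/4) m^{d-1}$ columns satisfy $a_c > 0$; since $\sum_c b_c = |\Psi_{B'}| < m^d/4$, fewer than $m^{d-1}/4$ columns satisfy $b_c = m$. Hence more than $m^{d-1}/2$ columns satisfy $a_c > 0$ and $b_c < m$ simultaneously, and each such column contains a $\p_\ell$-node at some dimension-0 position $<m$ and a non-$\p_\ell$-node at some dimension-0 position $\ge m$. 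The processor-assignment function along the column must therefore change value somewhere, yielding an arc between consecutive dimension-0 nodes with different processor assignments, i.e., a communication arc incident to $\p_\ell$; these arcs are distinct across columns.

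The main obstacle is precisely the extremal third case: the Loomis-Whitney bound on $\Psi_B$ inside $B$ becomes vacuous as $|\Psi_B|$ approaches $m^d$, and the analogous bound on $\Psi_{B'}$ inside $B'$ becomes vacuous as $|\Psi_{B'}|$ approaches $0$. The column argument sidesteps this difficulty by using the adjacency of $B$ and $B'$ to convert the ownership imbalance between the two blocks into boundary crossings over $\Omega(m^{d-1})$ columns; the no-recomputation hypothesis is again what lets me identify each communication arc with a distinct message exchanged by $\p_\ell$.
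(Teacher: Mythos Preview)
Your proof is correct and follows essentially the same approach as the paper's: a case split on how dominant $\p_\ell$ is in $B$ and in $B'$, invoking the Loomis--Whitney argument of Lemma~\ref{lem:w} on the appropriate block when the ownership is not too extreme, and a column-counting argument across the shared face in the extremal case. The only cosmetic differences are that the paper phrases the non-extremal cases in terms of $n_b$ and $n'_r$ (applying Lemma~\ref{lem:w} to $B'$ with the \emph{non-}$\p_\ell$ nodes as the distinguished set), and that its column argument counts columns with at least $m/4$ blue (resp.\ red) nodes rather than your $a_c>0$, $b_c<m$ thresholds; both yield $\Omega(m^{d-1})$ crossing columns.
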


\begin{proof}
We suppose without loss of generality that $B = B_{0,0,\dots,0}$ and
$B' = B_{1,0,\dots,0}$. We call a node \emph{blue} if it is evaluated
by $\p_\ell$, and \emph{red} otherwise; let $n_{b}$ and $n_{r}$ (resp.,
$n'_{b}$ and $n'_{r}$) be the number of blue and red nodes in $B$ (resp.,
$B'$). By definition, $n_b\geq n^d/(2p^{d/(d-1)})$. Moreover, since $B'$
is either shared or $\ell'$-owned with $\ell'\neq \ell$, we have that
$n'_r \geq n^d/(2p^{d/(d-1)})$. 

Suppose $n_b \geq 3n^d/(4p^{d/(d-1)})$ and $n'_r\geq 3n^d/(4p^{d/(d-1)})$.
Consider the $n^{d-1}/p$ arrays defined by nodes in $B$
and $B'$ sharing the last $d-1$ indexes (i.e., the lines orthogonal to the
adjacency face): more formally, for each $0 \leq i_1,\dots,i_{d-1} < n/p^{1/(d-1)}$,
each array contains nodes $\langle 0,i_1,\dots,i_{d-1}, \rangle, \dots, \langle
2n/p^{1/(d-1)},i_1,\dots,i_{d-1} \rangle$. There are $2n^{d-1}/(3p)$ arrays
containing at least $n/(4p^{1/(d-1)})$ blue (resp., red) nodes in $B$ (resp.,
$B'$). Then, at least  $n^{d-1}/(3p)$ arrays contain both red and blue nodes,
and thus for each of them there is a communication arc. Since recomputation
is disallowed, each of these arcs entails the communication of a datum, and
the lemma follows.

Finally, suppose $n_b < 3n^d/(4p^{d/(d-1)})$ (resp., $n'_r < 3n^d/(4p^{d/(d-1)})$).
The lemma follows by applying Lemma~\ref{lem:w} to $B$ (resp., $B'$), with
$W = n_b$ (resp., $W = n'_r$), and considering processors $\p_i$ with
$i \neq \ell$ as a single virtual processor.
\end{proof}

The next theorem gives the claimed $\BOM{{n^{d-1}}/({p^{(d-2)/(d-1)}}) +
W^{(d-1)/d}}$ lower bound, and its proof is inspired by the argument
in~\cite{Tiskin98} for the cube DAG (which however assumes balanced work).
The lower bound is matched by the balanced algorithm given in~\cite{Tiskin98},
which decomposes the $(n,d)$-array into $p$ arrays with dimension $d$ and
size $n/p^{1/(d-1)}$. 

\begin{theorem}\label{th:cube}
Let $\A_d$ be any algorithm for solving the $(n,d)$-array problem, without recomputation,
on a BSP with $p$ processors, where $1 < p \leq n^{d-1}$, and let $W$ be the maximum
number of nodes evaluated by a processor. If $W \leq \epsilon n^d$, for an arbitrary
constant $\epsilon \in (0,1)$, then the communication complexity of the algorithm is 
\[
H_{\A_d}(n,p) = \BOM{\frac{n^{d-1}}{p^{(d-2)/(d-1)}} + W^{(d-1)/d}}.
\]
\end{theorem}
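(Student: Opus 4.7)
The bound splits naturally into its two summands. The $W^{(d-1)/d}$ term is delivered immediately by Lemma~\ref{lem:w} applied to a processor that evaluates $W$ nodes, so the core of the argument is to establish the first term, $\BOM{n^{d-1}/p^{(d-2)/(d-1)}}$. I plan to reuse the block decomposition already defined in the section: partition the $(n,d)$-array into $M = p^{d/(d-1)}$ blocks of side $s = n/p^{1/(d-1)}$. Each block contains $s^d = n^d/M$ nodes, and the hypothesis $W \leq \epsilon n^d$, combined with the definition of $\ell$-owned, implies that no processor $\ell$-owns more than $2\epsilon M$ blocks.

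The argument then splits on the number of shared blocks. In the first case, at least $M/2$ of the blocks are shared. For each such block $B$, its nodes are partitioned among the processors with no part exceeding $s^d/2$; the edge-isoperimetric inequality on $[s]^d$ then forces at least $\BOM{s^{d-1}}$ of the DAG arcs inside $B$ to connect nodes assigned to different processors, each of which must be a message by the no-recomputation hypothesis. Summing across the $\BOM{M}$ shared blocks gives a total of $\BOM{Ms^{d-1}} = \BOM{n^{d-1} p^{1/(d-1)}}$ communication messages, and averaging over the $p$ processors yields the desired $\BOM{n^{d-1}/p^{(d-2)/(d-1)}}$ lower bound on $H_{\A_d}$, since $H_{\A_d}$ is at least the maximum over processors of its total sent-and-received count.

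In the second case, more than half of the blocks are owned, and I would leverage Lemma~\ref{lem:owned}: for every adjacency between an $\ell$-owned block and a block that is either shared or $\ell'$-owned with $\ell' \neq \ell$, processor $\p_\ell$ incurs $\BOM{n^{d-1}/p}$ messages. Writing $S_\ell$ for the set of $\ell$-owned blocks and $|\partial S_\ell|$ for its boundary in the block grid $[p^{1/(d-1)}]^d$, the task becomes to exhibit one processor with $|\partial S_\ell| = \BOM{p^{1/(d-1)}}$. If ownership is concentrated, meaning some $|S_\ell|$ is of order at least $p^{d/(d-1)^2}$, the edge-isoperimetric inequality on the block grid immediately delivers the required perimeter; otherwise every $|S_\ell|$ is small, and the constraint $\sum_\ell |S_\ell| \geq M/2$ must be shown to force $\BOM{M}$ non-monochromatic adjacencies, which then, pigeonholing over the $p$ processors, grants the needed perimeter to some processor.

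The main obstacle is the spread-out sub-case of the second case, where no processor is dominant: the straightforward edge-isoperimetric bound yields only $\sum_\ell |\partial S_\ell| = \BOM{M^{(d-1)/d}}$, short of the target $\BOM{M}$ by a factor of $p^{1/(d(d-1))}$. Closing this gap calls for a finer combinatorial argument that exploits the fact that small $|S_\ell|$ forces many distinct owners, and hence many cross-owner adjacencies; the intended derivation mirrors Tiskin's~\cite{Tiskin98} analysis of the cube DAG under the balanced-work hypothesis, suitably adapted to our weaker assumption $W \leq \epsilon n^d$. Verifying this adaptation, rather than any DAG-specific trick, is the technically delicate step of the proof.
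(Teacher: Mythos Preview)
Your handling of the $W^{(d-1)/d}$ term and of the many-shared-blocks case is sound; the averaging argument in Case~1 differs from the paper's (which instead fixes a single processor $\p_0$ that does $\BOM{n^d/p}$ work inside shared blocks and applies Lemma~\ref{lem:w} block-by-block to $\p_0$), but both routes reach the target.

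The genuine gap is in Case~2. First, note that once you are proving the first term you may assume $W < n^d/p^{d(d-2)/(d-1)^2}$ (otherwise the $W^{(d-1)/d}$ term already dominates); under this assumption every $|S_\ell|$ satisfies $|S_\ell| < 2W/s^d < 2\,p^{d/(d-1)^2}$, so your ``concentrated'' sub-case covers only a thin band and the ``spread-out'' sub-case is in fact the generic one --- yet that is precisely the sub-case you leave open. Second, even in the concentrated band, multiplying $|\partial S_\ell|$ by the $\BOM{n^{d-1}/p}$ of Lemma~\ref{lem:owned} requires arguing that the message sets produced by different boundary pairs are essentially disjoint; this is recoverable (each arc lies in $O(d)$ block-unions), but you do not address it. More importantly, pointing to Tiskin's analysis does not close the gap, because Tiskin's argument is \emph{not} a sharper isoperimetric count on the block grid: it is a temporal argument, and that is exactly what the paper uses.

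Concretely, the paper's Case~2 proceeds as follows. It picks a diagonal \emph{chain} of $\BT{p^{1/(d-1)}}$ blocks, most of which are owned. For an $\ell$-owned block $B_k$ on the chain it defines the $s_k$-hypercube as the smallest axis-aligned hypercube rooted at $B_k$ that is no longer $\ell$-owned, and shows (via Lemma~\ref{lem:w} at the block scale together with Lemma~\ref{lem:owned}) that $\p_\ell$ exchanges $\BOM{s_k^{d-1}\,n^{d-1}/p}$ messages while evaluating that hypercube. Then it selects a maximal subsequence $B_{k_0},\dots,B_{k_{t-1}}$ along the chain with $k_i - k_{i-1} \geq s_{k_{i-1}}$: since every node of $B_{k_i}$ depends on every node of $B_{k_{i-1}}$, the messages attached to the $i$-th hypercube occur in supersteps strictly after those of the $(i-1)$-st, so their costs \emph{add} in $H_{\A_d}$ even though they may be incurred by different processors. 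Minimizing $\sum_i s_{k_i}^{d-1}$ subject to $\sum_i s_{k_i} = \BT{p^{1/(d-1)}}$ and $t \le p^{1/(d-1)}$ then yields the claimed $\BOM{n^{d-1}/p^{(d-2)/(d-1)}}$. This temporal-separation idea --- exploiting the DAG's diagonal dependencies to make per-hypercube costs additive in $H_{\A_d}$ --- is the missing ingredient in your proposal; a purely spatial block-boundary count cannot replace it.
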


\begin{proof}
The second term of the lower bound follows directly from Lemma~\ref{lem:w} and dominates
the first one as long as $W = \BOM{n^d/(p^{d(d-2)/(d-1)^2})}$. In the remaining, we
focus on the first one and  assume $W < n^d/(p^{d(d-2)/(d-1)^2})$.

Suppose the number of shared blocks to be at least $p^{d/(d-1)}/2$. Then,
there exists a processor, say $\p_0$, computing $W' \geq n^d/(2p)$ nodes in
$b\geq 1$ shared blocks. Denote with $w_i$, for $0 \leq i < b$, the number of nodes
computed by $\p_0$ in the $i$-th shared block. We have $\sum_{i=0}^{b-1} w_i = W'$.
By Lemma~\ref{lem:w}, the messages exchanged by $\p_0$ within the $i$-th block
are $\BOM{w^{(d-1)/d}_i}$, and thus the communication complexity is at least
\[
H_{\A_d}(n,p) = \BOM{\sum_{i=0}^{b-1} w_i^{(d-1)/d}}.
\]
The summation is minimized when each $w_i$, $i \in \{0,1,\dots,b-1\}$, is set
to the maximum allowed value, that is $w_i = n^d/(2p^{d/(d-1)})$ since each block is
shared, and $b$ is set to $W	'/(n^d/(2p^{d/(d-1)})) \geq {p^{1/(d-1)}}$. The claim follows. 

Suppose now the number of shared blocks to be less than $p^{d/(d-1)}/2$.
Intuitively, in the following argument we search for an hypercube that is
almost entirely evaluated by a single processor communicating an amount of
messages proportional to the surface area. Then, we highlight a critical
sequence of these hypercubes which are evaluated one after the other: the
total amount of communication performed by the associated processors gives
the claimed bound.

We define a \emph{chain} of length $f$ a sequence of blocks
$B_{i^0_0,\dots,i^0_{d-1}},\dots,B_{i^f_0,\dots,i^{f}_{d-1}}$ such that
$i_k^{j}=i_k^{j-1}+1$, for each $0 \leq k < d$ and $1 \leq j < f$. For
instance, when $d=3$ a chain is a sequence of blocks parallel to the main
diagonal. Since less than $p^{d/(d-1)}/2$ blocks are shared, there exists
at least one chain of length $c'p^{1/(d-1)}$ containing at least $cp^{1/(d-1)}$
owned blocks, for suitable constants $c'$ and $c$ in $(0,1)$. For simplicity,
we denote with $B_k$ the $k$-th block in this chain, for $0 \leq k < c' p^{1/(d-1)}$.

Consider an $\ell$-owned block $B_k$. An \emph{$s$-hypercube} of $B_k =
B_{i^k_0,\dots,i^{k}_{d-1}}$ is defined as the set of blocks
$\cup_{0 \leq i_0,\ldots, i_{d-1} < s} \{B_{i^k_0+i_0,\dots,i^{k}_{d-1}+i_{d-1}}\}$,
and an $s$-hypercube is $\ell$-owned if at least $s^d/2$ blocks are $\ell$-owned.
Note that an $s$-hypercube of $B_k$ contains $B_k,B_{k+1},\dots,B_{k+s-1}$.
The $k$-size $s_k$ of $B_k$ is the smallest value such that the $s_{k}$-hypercube
is not $\ell$-owned, that is, there are at least $s_{k}^d/2$ shared or
$\ell'$-owned blocks, with $\ell' \neq \ell$. By definition, an $s_{k}$-hypercube
contains at least $(s_{k}-1)^d/2$ $\ell$-owned blocks. Envision an $s_k$-hypercube
as a $d$-dimensional array of size $s_k$ where processor $\p_{\ell}$ evaluates
$(s_{k}-1)^d/2 \leq W' < s_k^d/2$ nodes (i.e., the $\ell$-owned blocks);
by Lemma~\ref{lem:w}, there are $\BOM{W'^{(d-1)/d}}$ communication arcs: that is,
there are $\BT{s_k^{d-1}}$ $\ell$-owned blocks adjacent to (distinct) non
$\ell$-owned blocks. Then, by Lemma~\ref{lem:owned}, $\p_\ell$ exchanges
$\BOM{s_k^{d-1} n^{d-1}/p}$ messages for evaluating the nodes of the $s_k$-hypercube.

Consider now the longest sequence of owned blocks $B_{k_0},\dots,B_{k_{t-1}}$
such that $k_i-k_{i-1} \geq s_{i-1}$ for each $0 < i < t$, and $k_{t-1}\leq (c'
p^{1/(d-1)}-2p^{1/(d-1)^2})$. We observe that the last assumption guarantees that
the hypercube in $B_{k_{t-1}}$  with size $s_{k_{t-1}}$ is well defined, that is, it does not
exceed the boundaries of the $(n,d)$-array: indeed, since $W<n^d/(p^{d(d-2)/(d-1)^2})$, the side
$s_{k_{t-1}}$ is smaller than $2p^{1/(d-1)^2}$. We also notice that  blocks
$B_{k_0},\dots,B_{k_{t-1}}$ may be owned by a different processor. By construction, each node
within block $B_{k_i}$ depends on all the nodes in $B_{k_{i-1}}$, and thus all
messages exchanged while evaluating nodes in $B_{k_i}$ are subsequent to those
exchanged while evaluating nodes in $B_{k_{i-1}}$. Then, by summing the amount
of messages exchanged by the owners of the $t$ blocks for evaluating nodes in
the respective $s_{k_i}$-hypercubes, we have
\begin{equation}\label{eq:lbsten}
H_{\A_d}(n,p) = \BOM{\sum_{i=0}^{t-1} \frac{s_{k_i}^{d-1} n^{d-1}}{p}}.
\end{equation}
Let $S = \sum_{i=0}^{t-1} s_{k_i}$. Since there are $\BT{p^{1/(d-1)}}$
owned blocks in the chain, we have $S=\BT{p^{1/(d-1)}}$.
Equation~\ref{eq:lbsten} is minimized by setting $s_{k_i} = S/t$, and then
we obtain the claimed communication complexity by exploiting the fact that
$t \leq p^{1/(d-1)}$.
\end{proof}

\section{Sorting}\label{sec:sorting}

In this section we give a lower bound to the communication complexity of
comparison-based sorting algorithms. Comparison sorting is defined as the problem
in which a given set $X$ of $n$ input keys from an ordered set has to be sorted,
such that the only operations allowed on members of $X$ are pairwise comparisons.
Our bound only requires that no processor does more than a constant fraction $\e$
of the $\BT{n \log n}$ comparisons required by any comparison sorting algorithm,
for any $\e \in (0,1)$, and does not impose any protocol on the distribution of
the inputs and the outputs on the processors, nor upper bounds to the size of
their local memories, or specific communication patterns. As for previous work,
we still need the technical assumptions that the inputs are not initially replicated,
and that the processors store only a constant number of copies of any input key
at any moment during the execution of the algorithm.

The main result follows from the application of two lemmas, each of which provides
a different and independent lower bound to the communication complexity of sorting.
Both rely on non-trivial counting arguments, adapted from~\cite{AggarwalV88,AggarwalCS90},
that hinge on the fact that any comparison sorting algorithm must be able to distinguish
between all the $n!$ permutations of the $n$ inputs. The first lemma provides a lower
bound as a function of the maximum number $S$ of input keys initially held by a processor.
The second gives a lower bound as a function of the number $\Pi$ of permutations that
can be distinguished before any communications take place. We begin by stating and
proving the first lemma.
 
\begin{lemma}\label{lem:sorting}
Let ${\mathcal A}$ be any algorithm sorting $n$ keys on a BSP with $p$ processors,
with $1 < p \leq n$, and let $S$ denotes the maximum number of input keys initially
held by a processor. If each processor performs at most $\e(n \log n)$ comparisons,
with $\e$ being an arbitrary constant in $(0,1)$, and the input is not initially
replicated, then the communication complexity of the algorithm is
\[
H_{\A}(n,p) = \BOM{S}.
\]
\end{lemma}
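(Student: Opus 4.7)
The approach is to focus on a single processor $\p^*$ holding $S$ initial input keys, and show that the messages it sends or receives total $\BOM{S}$, which immediately implies the claimed bound on $H_{\A}(n,p)$. My argument combines an adversarial input construction against the algorithm's output placement with the comparison-budget hypothesis in a boundary regime.

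First I would observe that, since no input is initially replicated and each processor stores only $\BO{1}$ copies of any key throughout the execution, each of $\p^*$'s initial keys that is not finally held at $\p^*$ accounts for at least one outgoing message, and each output key at $\p^*$ not originally held accounts for at least one incoming message. Denote by $R\subseteq[n]$ the set of ranks whose sorted values end up at $\p^*$ (determined by the algorithm), and let $m=|R|$. If $K$ is the number of $\p^*$'s initial keys whose global rank lies in $R$, then the communication at $\p^*$ is at least $(S-K)+(m-K)$. An adversary choosing the values of the input keys---placing $\p^*$'s keys at ranks outside $R$ whenever possible---forces $K\le\max\{0,S+m-n\}$, since at most $\min\{S,n-m\}$ of $\p^*$'s keys can be assigned ranks outside $R$.

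Substituting this bound on $K$ yields $H_{\A}(n,p)\ge S+m\ge S$ when $m\le n-S$, and $H_{\A}(n,p)\ge 2n-S-m\ge n-S$ when $m>n-S$; the latter is $\BOM{S}$ whenever $S\le n/2$. The main obstacle is the regime $S>n/2$, where this counting bound degrades below $S$. I plan to close this gap by invoking the comparison budget: since any comparison sort of $n$ keys requires $\log_2(n!)=\BOM{n\log n}$ comparisons globally but $\p^*$ alone can perform at most $\e n\log n < n \log n$ of them, a constant fraction of the comparisons must take place at other processors. The $\BO{1}$-copy constraint implies that each key sent out of $\p^*$ enables only $\BO{\log n}$ useful sorting work at its recipient, so covering the $\BOM{n\log n}$ shortfall forces $\p^*$ to send $\BOM{n}=\BOM{S}$ keys.

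The most delicate step I foresee is making the ``useful work per received key'' argument rigorous in the boundary regime, so that the collective comparison shortfall at other processors translates cleanly into an $\BOM{S}$ lower bound on outgoing communication from $\p^*$. The remaining ingredients---the accounting of key movements via the no-replication and $\BO{1}$-copy assumptions, the adversarial construction, and the case analysis on $m$ and $S$---should be routine bookkeeping.
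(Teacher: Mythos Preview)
Your argument has a genuine gap in the regime $S\le n/2$. You treat the destination set $R$ of ranks that end up at $\p^*$ as if it were fixed in advance, so that an adversary can ``place $\p^*$'s keys at ranks outside $R$'' and force $K\le\max\{0,S+m-n\}$. But the lemma (and the paper) explicitly imposes \emph{no} protocol on the output distribution: the algorithm may choose $R$ adaptively, after seeing the input. In particular, nothing prevents an algorithm from arranging, for every input, that the keys finally residing at $\p^*$ are exactly the keys $\p^*$ held initially (for instance: ship all keys elsewhere, sort cooperatively within the comparison budget, then return each key to its origin together with its rank). For such an algorithm $K=S=m$ on every input, so your accounting $(S-K)+(m-K)$ yields $0$; the communication is of course large, but your specific lower bound does not capture it. The adversarial step therefore fails unless you add an assumption (such as an oblivious, contiguous output layout) that the lemma does not grant.

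A second issue is that you invoke an ``$\BO{1}$-copy throughout the execution'' hypothesis; this is a hypothesis of the \emph{theorem} that combines the two lemmas, but it is not a hypothesis of the present lemma, and the paper's proof of this lemma does not use it. Finally, your $S>n/2$ sketch is morally close to the paper's treatment of that regime, but the assertion that ``each key sent out of $\p^*$ enables only $\BO{\log n}$ useful sorting work at its recipient'' is not a statement you can read off directly; the paper makes this precise via a permutation-counting bound: $\p_0$ can distinguish at most $2^{\e n\log n}$ permutations, so the complement must distinguish at least $n!/2^{\e n\log n}$, and since it can distinguish at most $(S^*+h^*)!$ orderings one gets $S^*+h^*=\BOM{n}$ and hence $h^*=\BOM{S}$. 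For the main regime ($S$ bounded away from $n$) the paper likewise argues information-theoretically, bounding the permutations distinguishable by each side as $S!\binom{S+h'}{h'}$ and $(n-S)!\binom{n-S+h^*}{h^*}$ and comparing their product to $n!$; this avoids any reliance on where keys physically end up.
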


\begin{proof}
Without loss of generality, denote with $\p_0$ a processor holding $S$ input
keys at the beginning, and let $\p^*$ identify the submachine including the
remaining $p-1$ processing units. Clearly, since the input is not initially
replicated, $\p^*$ initially holds $n-S$ input elements. Finally, for convenience,
we redefine $\e$ as $1/(1+\de)$, with $\delta$ being an arbitrary constant greater
than zero.

Suppose first that $S > \left(1-\frac{\de}{4(1+\de/2)}\right)n$. By hypothesis,
each processor performs at most $(n\log n)/(1+\de)$ comparisons and thus 
processor $\p_0$ can boost the number of distinguishable permutations by a factor
of at most 
\[
2^{\frac{n \log n}{1+\de}} \leq \left(\frac{n}{e(1+\de/2)}\right)^{\frac{n}{(1+\de/2)}}
\leq \left(\frac{n}{1+\de/2}\right)!,
\]
where the first inequality can be verified by taking the logarithm of both sides,
and applies for $n$ larger than a suitable constant, while the second one follows
from Stirling's approximation. This holds independently of the number of keys that
$\p_0$ contains initially (which could be even $n$) or that it receives by $\p^*$
during the execution of the algorithm. Therefore, $\p^*$ must distinguish at least
$n!/(n/(1+\de/2))!$ permutations. Then, if we denote with $S^*=n-S$ the number of
keys initially held by $\p^*$, and with $h^*$ the number of keys sent by $\p_0$ to
$\p^*$, we must have
\[
(S^*+h^*)! \geq \frac{n!}{(n/(1+\de/2))!}.
\]
By taking the logarithm of both sides and after some manipulation, we obtain
\[
(S^*+h^*) \log (S^*+h^*) \geq \frac{\de n}{2(1+\de/2)} \log \frac{n}{1+\de/2},
\]
from which follows
\[
S^*+h^* \geq \frac{\de n}{3(1+\de/2)}.
\]
Then, since $S^* < \frac{\de n}{4(1+\de/2)}$ and $S \leq n$,
\[
h^* > \frac{\de n}{12(1+\de/2)} \geq \frac{\de S}{12(1+\de/2)},
\]
and the lemma follows.

Now consider the case $S \leq \left(1-\frac{\de}{4(1+\de/2)}\right)n$. Let $h'$ and
$h^*$ be the number of keys received by $\p_0$ and $\p^*$, respectively, and let
$V'$ and $V^*$ be the maximum number of permutations distinguished by $\p_0$ and $\p^*$,
respectively. We must have $V'V^* \geq n!$. We also have $V' \leq S! \binom{S+h'}{h'}$:
indeed, $\p_0$ can distinguish all the $S!$ permutations of the $S$ input keys, and the
number of ways to intersperse the $h'$ received keys within the group of $S$ inputs
is $\binom{S+h'}{h'}$. (Note that the $h'!$ permutations of the $h'$ messages are
accounted in $V^*$.) Similarly, $V^* \leq (n-S)! \binom{n-S+h^*}{h^*}$. Thus, we have
\[
S! (n-S)! \binom{S+h'}{h'}\binom{n-S+h^*}{h^*} \geq n!,
\]
whence
\[
\binom{S+h}{h}\binom{n-S+h}{h} \geq \binom{n}{S},
\]
where $h = \max\{h',h^*\}$. By using the fact that $(a/b)^b \leq \binom{a}{b} \leq (e a/b)^b$
for any integer values $a$ and $b$, and then by taking the logarithm of both sides, we get
\begin{equation}\label{eq:sort_log}
h \log\left(\frac{e^2}{h^2} (S+h)(n-S+h)\right) \geq S \log \frac{n}{S},
\end{equation}
where $e$ is Euler's constant.

In the rest of the proof we will prove that $h \geq \beta S$ for a suitable constant
$\beta \in (0,1)$ that will be defined later. Suppose, for the sake of contradiction,
that $h < \beta S$. We first observe that the left-hand side of Equation~\ref{eq:sort_log}
is increasing in $h$. Indeed, we have 
\[
h \log\left(\frac{e^2}{h^2} (S+h)(n-S+h)\right) = 
2h \log e + h \log \frac{S+h}{h} + h\log \frac{n-S+h}{h},
\]
where $h \log ((x+h)/h)$, with $x \in \{S, n-S\}$, is strictly increasing in $h$
as soon as $x > 0$. Therefore, the left-hand side of Equation~\ref{eq:sort_log}
can be upper bounded as follows:
\begin{align*}
h \log\left(\frac{e^2}{h^2} (S+h)(n-S+h)\right)
&< \beta S \log\left(e^2 \frac{(S + \beta S)}{\beta S} \frac{(n - S + \beta S)}{\beta S}\right) \\
&= \beta S \log\left(\frac{e^2 (1 + \beta) (n - S(1 - \beta))}{\beta^2 S}\right) \\
&< \beta S \log\left(\frac{2 e^2 n}{\beta^2 S}\right) \\
&< S \log\left(\frac{2en}{\beta S}\right)^{2\beta},
\end{align*}
where we have also used the facts that $\beta < 1$ and $S \leq n$. We now argue that
the last term in the above formula is upper bounded by $S \log(n/S)$. We shall
consider two separate cases. The first is when $n/S \geq 2$. In this case, we
set $\beta = \log(n/S)/(8 \log(2en/S))$. (Observe that $0 < \beta < 1$, as required.)
Standard calculus shows that $8 \log(2en/S)/\log(n/S) < (2en/S)^3$ when $n/S \geq 2$.
Hence, we can write
\begin{align*}
\left(\frac{2en}{\beta S}\right)^{2\beta}
&= \left(\frac{2en \cdot 8 \log(2e n/S)}{S \log(n/S)}\right)^{\frac{\log (n/S)}{4 \log(2e n/S)}} \\
&< \left(\frac{2e n}{S}\right)^{\frac{\log(n/S)}{\log(2e n/S)}} \\
&= \frac{n}{S}.
\end{align*}
Consider now the case $4(1+\de/2)/(4+\de) \leq n/S < 2$. Then, we have
$(2en/(\beta S))^{2\beta} < (11/\beta)^{2\beta}$. Since $\delta$ is a constant,
and since the right-hand term of the above inequality tends to one as $\beta$
tends to zero, then for each $\delta > 0$ there exists a constant $\beta \in (0,1)$
such that $(11/\beta)^{2\beta} \leq 4(1+\de/2)/(4+\de)$. Therefore, we have shown
for both cases that, if $h < \beta S$,
\[
h \log\left(\frac{e^2}{h^2} (S+h)(n-S+h)\right) < S \log\frac{n}{S},
\]
which is in contradiction with Equation~\ref{eq:sort_log}. It follows that
there exists a constant $\beta > 0$ such that $h \geq \beta S$, giving the lemma.
\end{proof}

We now provide a second lemma, which bounds from below the communication complexity
of sorting in BSP as a function of the number $\Pi$ of permutations that can be
distinguished before any communications take place, that is, when processors'
can only compare their local inputs.

As an aside, we observe that the proof of this lemma can be straightforwardly cast for
the LPRAM model, yielding a much simpler proof for Theorem 3.2 of~\cite{AggarwalCS90},
which bounds from below the \emph{communication delay}, that is, the number of
communication steps, required for comparison-based sorting.

\begin{lemma}\label{lem:permutations}
Let ${\mathcal A}$ be any algorithm sorting $n$ keys on a BSP with $p$ processors,
where $1 < p \leq n$, and let $\Pi$ be the number of distinct permutations that can
be distinguished by ${\mathcal A}$ \emph{before} the second superstep, that is,
by comparing the inputs that (possibly) reside initially in the processors' local
memories. If ${\mathcal A}$ stores only a constant number of copies of any key at
any time instant, then the communication complexity of the algorithm is 
\[
H_{\A}(n,p) = \BOM{\frac{n \log n - \log \Pi}{p \log (n/p)}}.
\]
\end{lemma}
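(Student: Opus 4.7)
The plan is to extend the permutation-counting methodology of Aggarwal and Vitter~\cite{AggarwalV88} and of Aggarwal, Chandra, and Snir~\cite{AggarwalCS90} to the BSP model under the $\BO{1}$-replication hypothesis. For each $t$, let $\Pi_t$ denote the maximum number of input permutations that the algorithm's global state can distinguish just before superstep $t+1$, so that $\Pi_1 \leq \Pi$ by assumption and $\Pi_S \geq n!$ because $\A$ must sort correctly.

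First, I would bound how much $\Pi_t$ can grow across a single communication phase. Suppose that during superstep $t$, processor $\p_i$ enters holding $m_i^{(t)}$ keys and receives $h_i^{(t)}$ further keys. Adapting the standard interleaving argument from~\cite{AggarwalV88}, the contribution of $\p_i$ to the count of distinguishable permutations can grow by a factor of at most $\binom{m_i^{(t)}+h_i^{(t)}}{h_i^{(t)}}$, so multiplying across processors yields
\[
\frac{\Pi_{t+1}}{\Pi_t} \;\leq\; \prod_{i=0}^{p-1} \binom{m_i^{(t)}+h_i^{(t)}}{h_i^{(t)}}.
\]
The $\BO{1}$-replication hypothesis enters precisely here, since it guarantees $\sum_i m_i^{(t)} \leq cn$ at every instant for some constant $c$, preventing any concentration of inputs that would let a processor gain much information cheaply.

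Chaining this inequality over all $t$, taking logarithms, and using $\log\binom{a+b}{b} \leq b\log(e(a+b)/b)$, the bound $\Pi_S \geq n!$ becomes
\[
n\log n - \log\Pi - \BO{n} \;\leq\; \sum_{t=2}^{S} \sum_{i=0}^{p-1} h_i^{(t)}\log\frac{e(m_i^{(t)}+h_i^{(t)})}{h_i^{(t)}}.
\]
For each fixed $t$, a Lagrangian optimization of the inner sum subject to $\sum_i m_i^{(t)} \leq cn$, $h_i^{(t)}\leq h^{(t)}$, and $\sum_i h_i^{(t)} \leq p\,h^{(t)}$ gives a per-superstep bound of $\BO{p\,h^{(t)}\log\!\bigl(1 + cn/(p\,h^{(t)})\bigr)}$. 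Since any communicating superstep satisfies $h^{(t)} \geq 1$, the logarithm is $\BO{\log(n/p)}$; summing over $t$ and using $\sum_t h^{(t)} = H_{\A}$ then yields $n\log n - \log\Pi = \BO{p\,H_{\A}\,\log(n/p)}$, which rearranges to the claimed bound.

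The main obstacle will be a careful justification of the per-superstep multiplier $\binom{m_i^{(t)}+h_i^{(t)}}{h_i^{(t)}}$ in the presence of replicated keys: a received key may already occur as a copy at the receiver or elsewhere, so one must define what counts as a ``new'' ordering carefully so that no comparison is double-counted across processors. The Lagrangian step is technical but routine, and the $\BO{1}$-replication bound on $\sum_i m_i^{(t)}$ is exactly the ingredient that converts a naive $\BO{p H \log n}$ upper bound into the tight $\BO{p H \log(n/p)}$ one required by the statement.
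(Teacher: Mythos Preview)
Your approach is sound and would yield the lemma, but the paper arrives at the same bound by a considerably shorter route. Rather than tracking general $h$-relations and optimizing $\sum_i h_i^{(t)}\log\bigl(e(m_i^{(t)}+h_i^{(t)})/h_i^{(t)}\bigr)$ via a Lagrangian, the paper first observes that any superstep performing an $h$-relation can be split into $h$ supersteps each performing a $1$-relation without changing $H_{\A}$. After a single $1$-relation, processor $\p_j$ holds $m_j$ keys and can increase the number of distinguished permutations by a factor of at most $m_j$; the bounded-replication hypothesis gives $\sum_j m_j \leq n$, and AM--GM immediately bounds the global per-step factor by $(n/p)^p$. Since there are now exactly $H_{\A}$ such $1$-relation steps, one gets $(n/p)^{pH_{\A}} \geq n!/\Pi$ and the claimed bound follows in one line, with no binomial coefficients and no convex optimization.

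In effect, the $1$-relation decomposition collapses your entire Lagrangian step into a single application of AM--GM, and it also sidesteps the ``main obstacle'' you identify: with one key arriving at a time there is no interleaving subtlety to justify, only the trivial fact that a new key can land in at most $m_j$ positions. What your route buys is a sharper per-superstep estimate $p\,h^{(t)}\log\bigl(1+cn/(p\,h^{(t)})\bigr)$, which is tighter than $p\,h^{(t)}\log(n/p)$ when $h^{(t)}$ is large; this refinement is not needed here but could be useful if one wanted tradeoff-style bounds that depend on the granularity of the supersteps.
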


\begin{proof}
We prove the lemma only for the case when every input key is present in only
one of the local memories of the processors at any time instant; the extension
to the case when a data element is simultaneously present in a constant number
of local memories is straightforward and thus omitted.

We suppose that $\A$ performs $1$-relations in each superstep, that is,
each processor can send and receive only one message. This is without loss
of generality because we observe that each superstep of $\A$ where each
processor performs an $h$-relation (i.e., it sends and receives at most
$h$ messages), can be decomposed into $h$ $1$-relation supersteps without
increasing the communication complexity of $\A$ (since the latter does not
charge a synchronization cost due to the latency incurred by each superstep).
Let $m_j$ denote the number of input keys in local memory of processor $\p_j$
after a given superstep of the algorithm. Since, by hypothesis, a data element
is present in only one of the local memories of the processors at any time
instant, we have that $\sum_{j=1}^{p} m_j \leq n$. Hence, after a communication
superstep, which by hypothesis entails a $1$-relation, the space of permutations
can be divided, at most, by the value of an optimal solution of the following
convex program (observe, in fact, that $\p_j$ may already have distinguished
$(m_j - 1)!$ permutations before the last superstep): 
\begin{align*}
\max &\, \prod_{j=0}^{p-1} m_j\\
\text{s.t.} &\, \sum_{j=0}^{p-1} m_j \leq n.
\end{align*}
Since the solution is given by $m_j = n/p$ for each $j$, its value is $(n/p)^p$.
Thus, after $x$ supersteps, the space of permutations can have been divided by
at most $(n/p)^{px}$. Since there remain $n!/\Pi$ distinct possible permutations,
we must have
\[
\left(\frac{n}{p}\right)^{px} \geq \frac{n!}{\Pi}.
\]
By taking the logarithm of both sides, we obtain
\[
x = \BOM{\frac{n \log n - \log \Pi}{p \log (n/p)}},
\]
as desired.
\end{proof}

Now we are ready to prove the main result of this section, an $\BOM{(n \log n)/(p \log(n/p))}$
lower bound to the communication complexity of any comparison sorting algorithm.
The result follows by combining the bounds given by the previous two lemmas. Both
bounds are not tight when considered independently, the first (Lemma~\ref{lem:sorting})
because it is weak when at the beginning the input keys tend to be distributed evenly
among the processors, the second (Lemma~\ref{lem:permutations}) because it is
weak when the input keys tend to be concentrated on one or few processors. However,
the simultaneous application of both provides the sought (tight) lower bound.

\begin{theorem}\label{thm:sorting}
Let ${\mathcal A}$ be any algorithm for sorting $n$ keys on a BSP with $p$
processors, with $1 < p \leq n$. If each processor performs at most $\e (n \log n)$
comparisons, with $\e$ being an arbitrary constant in $(0,1)$, the inputs are not
initially replicated, and the $p$ processors store only a constant number of copies
of any key at any time instant, then the communication complexity of the algorithm
is
\[
H_{\A}(n,p) = \BOM{\frac{n \log n}{p \log(n/p)}}.
\]
\end{theorem}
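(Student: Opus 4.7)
My plan is to combine Lemma~\ref{lem:sorting} and Lemma~\ref{lem:permutations} through a case analysis on $S$, the maximum number of input keys initially held by any processor. The bridge between the two lemmas will be an upper bound on $\log \Pi$ in terms of $S$: before any communication, a processor that initially holds $s_j$ input keys can distinguish at most $s_j!$ permutations by local comparisons, so
\[
\log \Pi \;\leq\; \sum_{j=0}^{p-1} \log(s_j!) \;\leq\; \sum_{j=0}^{p-1} s_j \log s_j \;\leq\; (\log S)\sum_{j=0}^{p-1} s_j \;\leq\; n \log S,
\]
where the last step uses the non-replication hypothesis $\sum_j s_j \leq n$ together with $s_j \leq S$. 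Substituting into Lemma~\ref{lem:permutations} gives $H_{\A} = \BOM{n \log(n/S) / (p \log(n/p))}$, and pairing with Lemma~\ref{lem:sorting} yields the two-headed lower bound
\[
H_{\A}(n,p) \;=\; \BOM{\max\!\left\{\, S,\; \frac{n\log(n/S)}{p\log(n/p)}\,\right\}}.
\]

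Let $T = n \log n/(p \log(n/p))$ denote the target. Since the first term of the max is increasing in $S$ and the second decreasing, to finish I would verify that, for every admissible $S \in [n/p, n]$ (the lower endpoint coming from pigeonhole), this maximum is $\BOM{T}$. The plan is a split on $p$ versus $\sqrt n$. When $p \leq \sqrt n$, the inequality $\log(n/p) \geq (\log n)/2$ yields $T \leq 2n/p$, and the pigeonhole bound $S \geq n/p$ combined with Lemma~\ref{lem:sorting} immediately gives $H_{\A} = \BOM{n/p} = \BOM{T}$. When $p > \sqrt n$ I would split further on $S$: if $S \leq n/\sqrt p$, then $\log(n/S) \geq (\log p)/2 \geq (\log n)/4$, so the substituted form of Lemma~\ref{lem:permutations} gives $H_{\A} = \BOM{T}$; if instead $S > n/\sqrt p$, then Lemma~\ref{lem:sorting} gives $H_{\A} = \BOM{n/\sqrt p}$, which dominates $T$ because $\sqrt p\,\log(n/p)$ is at least $n^{1/4}$ and hence dwarfs $\log n$ for large $n$.

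The main obstacle I anticipate is tuning the threshold $S = n/\sqrt p$ so that the two lemmas' regions of effectiveness overlap with enough slack to give explicit constants everywhere in the range $1 < p \le n$: Lemma~\ref{lem:sorting} is weak when the input is nearly evenly spread, while Lemma~\ref{lem:permutations} is weak when it is concentrated on a few processors, and the threshold has to sit squarely in their common sweet spot. Care is also needed at the boundary $p$ near $n$ to exploit the convention $\log x = \max\{1,\log_2 x\}$, and to confirm that the bridge bound $\log \Pi \le n \log S$ is not spoilt by the ``constant number of copies'' relaxation of the no-replication hypothesis (which merely inflates $\sum_j s_j$ by a constant factor, preserving the asymptotics).
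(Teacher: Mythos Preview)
Your proposal is correct and follows essentially the same approach as the paper: combine Lemmas~\ref{lem:sorting} and~\ref{lem:permutations} via the bridge bound $\log \Pi \leq n \log S$ to obtain the two-term lower bound $\BOM{S + n\log(n/S)/(p\log(n/p))}$, then verify this is $\BOM{T}$ for all admissible $S \geq n/p$. The only difference is in this final verification: instead of your case split on $p$ versus $\sqrt n$ and $S$ versus $n/\sqrt p$, the paper simply notes (by calculus) that the two-term expression is increasing in $S$ once $S = \BOM{n/(p\log(n/p))}$, so its minimum over $S \geq n/p$ is attained at $S = n/p$, where the two terms sum exactly to $n\log n/(p\log(n/p))$.
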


\begin{proof}
The result follows by combining Lemma~\ref{lem:sorting} with Lemma~\ref{lem:permutations}.
Since, by hypothesis, each processor performs at most $\e (n \log n)$ comparisons,
with $\e \in (0,1)$, and the inputs are not initially replicated, we can apply
Lemma~\ref{lem:sorting}, obtaining
\[
H_{\A}(n,p) = \BOM{S},
\]
where $S$ denotes the maximum number of input keys initially held by a processor.
Moreover, since by hypothesis the $p$ processors store a constant number of copies
of any key at any time instant, we can also apply Lemma~\ref{lem:permutations},
obtaining
\begin{equation}\label{eq:lemma_permutations}
H_{\A}(n,p) = \BOM{\frac{n \log n - \log \Pi}{p \log (n/p)}},
\end{equation}
where $\Pi$ denotes the number of distinct permutations that can be distinguished
by $\A$ by comparing the inputs that initially reside in processors' local memories.
In order to compare the latter bound with the first one, we need to bound $\Pi$ from
above as a function of $S$. To this end, let $s_i$ denote the number of input keys
initially held by processor $\p_i$. Hence, $S = \max \{s_0,s_1,\dots,s_{p-1}\}$.
The number of permutations that can be distinguished by ${\mathcal A}$ without
requiring communication, that is, by letting each processor sort the keys that it
holds at the beginning of the computation, is therefore $\Pi = \prod_{i=0}^{p-1} s_i!$.
Since the inputs are not initially replicated, an upper bound to $\Pi$ as a function
of $S$ is given by the value of an optimal solution of the following mathematical
program:
\begin{align*}
\max &\, \prod_{i=0}^{p-1} s_i!\\
\text{s.t.} &\, \sum_{i=0}^{p-1} s_i = n\\
&\quad\quad s_i \leq S \quad \forall i = 0,1,\dots,p-1.
\end{align*}
Since $a!b! \leq (a+b)!$ for any integer $a$ and $b$, by a convexity argument
it follows that $\prod_{i=0}^{p-1} s_i! \leq (S!)^{n/S}$. Therefore, we can plug
$\Pi = (S!)^{n/S}$ in Equation~\ref{eq:lemma_permutations}, obtaining
\[
H_{\A}(n,p) = \BOM{\frac{n \log n - n \log S}{p \log(n/p)}}.
\]
Putting pieces together, we conclude that
\[
H_{\A}(n,p) = \BOM{\frac{n \log(n/S)}{p \log(n/p)} + S}.
\]
Standard calculus shows that the right-hand side of the above equation is increasing
in $S$ when $S = \BOM{n/(p \log(n/p))}$. The theorem follows by observing that
$S \geq \lceil n/p \rceil$.
\end{proof}

\section{Fast Fourier Transform}\label{sec:FFT}

In this section we consider the problem of computing the Discrete Fourier
Transform of $n$ values using the $n$-input FFT DAG. In the FFT DAG, a vertex
is a pair $\langle w,l \rangle$, with $0 \leq w < n$ and $0 \leq l \leq \log n$,
and there exists an arc between two vertices $\langle w,l \rangle$ and
$\langle w',l' \rangle$ if $l' = l+1$, and either $w$ and $w'$ are identical
or their binary representations differ exactly in the $l'$-th bit. We show that,
when no processor computes more than a constant fraction of the total number of
vertices of the DAG, the communication complexity is $\BOM{n \log n/(p \log(n/p))}$.
Our bound does not assume any particular I/O protocol, and only requires that
every input resides in the local memory of exactly one processor before the
computation begins; as for preceding results, our bound also hinges on the
restriction on the nature of the computation whereby each vertex of the FFT DAG
is computed exactly once. The bound is tight for any $p \leq n$, and is achieved
by the well-known recursive decomposition of the DAG into two sets of smaller
$\sqrt{n}$-input FFT DAGs, with each set containing $\sqrt{n}$ of such
subDAGs (see, e.g.,~\cite{BilardiPPS07}).

We will first establish a lemma which, under the same hypothesis of the main
result, provides a lower bound to the communication complexity as a function
of the maximum work performed by any processor. The proof of the lemma is based
on a bandwidth argument, which exploits the fact that an FFT DAG can perform all
cyclic shifts (see, e.g.,~\cite{Leighton92}), and on the following technical result
which is implicit in the work of Hong and Kung (a simplified proof is due to
Aggarwal and Vitter~\cite{AggarwalV88}).

\begin{lemma}[{\cite{HongK81}}]\label{lem:HK-FFT}
Consider the computation of the $n$-input FFT DAG. During the computation,
if a processor accesses at most $S$ nodes of the DAG, then it can evaluate
at most $2S \log S$ nodes, for any $S \geq 2$.
\end{lemma}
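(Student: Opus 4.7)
The plan is to follow the red/blue pebble-game argument of Hong and Kung, with the simplification due to Aggarwal and Vitter. First I would model the processor's local computation abstractly by interpreting the $S$ accessed nodes as a set $R$ of DAG values made available to the processor externally (either from its initial local memory or via received messages), and the evaluated set as the set $E$ of DAG nodes that the processor computes via its own butterfly operations. In this pebble-game reading, each $v\in E$ requires both of its FFT-DAG predecessors to be already in $R\cup E$ at the time $v$ is placed, so $R$ forms a dominator set for $E$: every path in the FFT DAG from a level-$0$ input to a node of $E$ must pass through $R$. The problem then reduces to the purely combinatorial statement that any set of $S$ nodes dominates at most $2S\log S$ other nodes of the $n$-input FFT DAG.

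For the combinatorial bound, the plan is to exploit the recursive structure of the butterfly: partition the $\log n$ levels of the FFT into $(\log n)/\log S$ \emph{slabs} of $\log S$ consecutive levels each, so that every slab decomposes into $n/S$ disjoint sub-FFTs of width $S$. Within a single slab with $k$ pebbles available on its top level, a node at slab-depth $\ell$ can be derived only if all $2^\ell$ top-of-slab inputs of its sub-FFT are pebbled, so at most $\lfloor k/2^\ell\rfloor\cdot 2^\ell\le k$ nodes are derivable at each such depth, giving at most $k\log S$ derivations per slab. Summing across slabs, and using that the pebbles entering the top of slab $j+1$ are exactly the nodes derived at the bottom of slab $j$ together with whatever portion of $R$ sits on that level, a telescoping argument yields $|E|\le 2S\log S$, as claimed.

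The main obstacle I anticipate is the inter-slab bookkeeping: pebbles produced at the bottom of one slab pay for derivations in the next slab, so one has to argue that the $S$ external pebbles are amortized properly across slab boundaries rather than wastefully charged multiple times. The cleanest route, essentially the one in Aggarwal and Vitter~\cite{AggarwalV88}, bypasses slabs altogether and argues directly on dominators: a node at FFT level $\ell$ depends on a specific set of $2^\ell$ inputs, so $S$ dominators can fully cover at most $\lfloor S/2^\ell\rfloor$ such input sets and hence at most $S$ nodes at each level, while no node above level $\log S$ is dominated at all because its input cone already exceeds $S$; summing the per-level contributions from levels $1$ through $\log S$ gives $S\log S$, and a small adjustment to handle dominators not located at level $0$ absorbs the remaining factor of $2$.
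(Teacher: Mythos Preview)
The paper does not supply a proof of this lemma: it is quoted from Hong and Kung~\cite{HongK81}, with a remark that a simpler argument appears in Aggarwal and Vitter~\cite{AggarwalV88}. So there is no in-paper proof to compare against; your proposal is effectively a reconstruction of the cited literature.

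Your reduction to a dominator bound is correct, and the per-level count when $R$ sits entirely at level~$0$ is clean and right: at level $\ell$ the inputs split into $n/2^\ell$ groups of size $2^\ell$, at most $\lfloor S/2^\ell\rfloor$ of them are fully covered, each yields $2^\ell$ dominated nodes, hence at most $S$ dominated nodes per level and nothing beyond level $\log S$, giving $S\log S$ in total.

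The gap is exactly where you wave your hands. The sentence ``a small adjustment to handle dominators not located at level~$0$ absorbs the remaining factor of~$2$'' understates the difficulty; this is the heart of the matter. Once $R$ contains interior nodes, both of your structural claims break: a node far above level $\log S$ \emph{can} be dominated (two $R$-nodes one level below it already suffice), and in the slab version the assertion ``a node at slab-depth $\ell$ can be derived only if all $2^\ell$ top-of-slab ancestors are pebbled'' is simply false whenever an $R$-node sits strictly inside the slab. Na\"ively pushing interior dominators to level~$0$ (or to the top of their slab) inflates their count by as much as a factor of~$S$, so no constant-factor patch is available along that route. The actual Hong--Kung and Aggarwal--Vitter arguments do handle arbitrary dominator placement, but with real additional work (the $S$-span machinery in the former; a careful level-by-level accounting that exploits the butterfly pairing in the latter). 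Your plan points at the right proof, but the step you label an afterthought is where the proof actually lives.
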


\begin{lemma}\label{lem:cyclic_shifts}
Let ${\mathcal A}$ be any algorithm computing, without recomputation, an $n$-input
FFT DAG on a BSP with $p$ processors, with $1 < p \leq n$, and let $W$ be the maximum
number of nodes of the FFT DAG computed by a processor. If $W \leq \epsilon (n \log n)$,
for an arbitrary constant in $(0,1)$, and the inputs are not initially replicated,
then the communication complexity of the algorithm is
\[
H_{\A}(n,p) = \BOM{\frac{W}{\log W}}.
\]
\end{lemma}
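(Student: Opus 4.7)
The plan is to focus on a processor $\p_0$ that evaluates the maximum number $W$ of FFT DAG nodes and apply Lemma~\ref{lem:HK-FFT} to it. Because recomputation is disallowed, every FFT-node value that $\p_0$ uses as an operand but does not evaluate itself must be either among the $I$ initial inputs held by $\p_0$ or among the $R$ messages it receives, so $\p_0$ ``accesses'' at most $S = I + R$ distinct DAG nodes in the sense of Lemma~\ref{lem:HK-FFT}. The lemma then gives $W \leq 2S\log S$; combined with $S \leq n\log n$ and $W \leq \e n\log n$, which together force $\log S = \BT{\log W}$, this yields $I + R = \BOM{W/\log W}$. If $R \geq I$ then $H_{\A} \geq R = \BOM{W/\log W}$ and the lemma follows.

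The non-trivial case is $I > R$, in which $\p_0$ starts out holding $I = \BOM{W/\log W}$ inputs. Here I would invoke the cyclic-shift realization property of the FFT graph~\cite{Leighton92}: for each of the $n$ cyclic shifts $\sigma$, the underlying butterfly admits a set of $n$ vertex-disjoint input--output paths implementing $\sigma$. For a fixed $\sigma$, any path whose starting input lies in $\p_0$'s pool but whose ending output is \emph{not} evaluated by $\p_0$ must exit $\p_0$'s evaluated subgraph at some last-in-$\p_0$ vertex; its successor along the path is evaluated by another processor, and since recomputation is forbidden this exit forces a distinct outgoing message of $\p_0$, with vertex-disjointness (for the fixed $\sigma$) guaranteeing that different paths produce different messages. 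Averaging over the $n$ cyclic shifts yields a shift with at least $I(1 - F/n)$ such exits, where $F$ is the number of final outputs evaluated by $\p_0$, so $T \geq I(1 - F/n) = \BOM{W/\log W}$ whenever $F \leq (1-c_1)n$ for a fixed constant $c_1 > 0$. A symmetric application of the same argument on the output side gives the companion bound $R \geq F(1 - I/n)$, which handles the subcase $F > (1-c_1)n$ and $I \leq (1-c_2)n$ analogously.

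The final residual subcase, $I > (1-c_2)n$ and $F > (1-c_1)n$, in which both the initial inputs and the final outputs are almost entirely concentrated at $\p_0$, is finished by a further application of Lemma~\ref{lem:HK-FFT} to the remaining $p-1$ processors lumped into a single virtual processor: they jointly evaluate $n(\log n + 1) - W = \BOM{n\log n}$ DAG nodes but initially hold only $n - I < c_2 n$ inputs, and since only $\p_0$ can supply them externally, the lemma forces $(n - I) + T = \BOM{n}$, whence $T = \BOM{n} = \BOM{W/\log W}$ (using $W \leq \e n\log n$ to conclude $W/\log W \leq \e n$), provided $c_2$ is chosen small enough. The main technical hurdle I anticipate is the averaging step of the cyclic-shift argument and the careful bookkeeping that converts a vertex-disjoint path system into a set of distinct outgoing (respectively, incoming) messages of $\p_0$; once this is in place, combining the three cases yields $H_{\A}(n,p) = \BOM{W/\log W}$.
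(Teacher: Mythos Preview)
Your proposal is correct and follows essentially the same approach as the paper: apply Lemma~\ref{lem:HK-FFT} to the heavy processor (or to the complementary virtual processor) and then exploit the cyclic-shift realizability of the FFT graph to convert a large input or output pool into many communication arcs. The paper's case split is a bit more economical---it simply splits on whether $\p^*$ holds at least $n/2$ of the outputs, applying the Hong--Kung bound plus a single cyclic-shift step to $\p_0$ in one case and symmetrically to $\p^*$ in the other---whereas your Case~B branches into three subcases (B1, B2, B3) that ultimately cover the same territory; in particular, your B2 and B3 together reproduce what the paper handles in its second case by role reversal.
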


\begin{proof}
Let $\p_0$ be a processor computing $W$ nodes of the FFT DAG, and consider as an
unique processor $\p^*$ the remaining $p-1$ processing units.

Suppose first that processor $\p^*$ contains at least $n/2$ of the $n$ output nodes
at the end of the algorithm. Let $K = W/(2 \log W)$. Since $\p_0$ evaluates $W$
nodes, it follows from Lemma~\ref{lem:HK-FFT} that $\p_0$ accesses at least $K$
node values during the execution of the algorithm. These nodes can be either inputs
initially held by $\p_0$, or nodes whose values have been evaluated and then sent
by processor $\p^*$. If at least $K/2$ of them have been sent by $\p^*$, the
lemma follows. Otherwise, $\p_0$ initially contains at least $K/2$ input nodes.
Since the inputs are not initially replicated, and since an FFT DAG can perform
all cyclic shifts, by~\cite[Lemma 10.5.2]{Savage98} there exists a cyclic shift
that permutes $K/4$ input nodes initially held by processor $\p_0$ into $K/4$
output nodes held by $\p^*$ at the end of the algorithm. Since $K = W/(2 \log W)$
and since, by hypothesis, $W \leq \epsilon (n \log n)$, it holds that
$K/4 \leq n/2$, and thus at least $K/4$ messages are actually needed.
Therefore, $H_{\A}(n,p) \geq W/(8 \log W)$.

Now suppose that processor $\p^*$ contains at most $n/2$ output nodes at the
end of the algorithm. Thus, there are at least $n/2$ output nodes in $\p_0$.
Since, by hypothesis, recomputation is disallowed, $\p^*$ computes
$W^* = n \log n - W \leq n \log n$ nodes of the DAG. The lemma follows
by inverting the role of $\p_0$ and $\p^*$ and setting $K = W^*/(2 \log W^*)$
in the previous argument.
\end{proof}

We note that the above bound is matched when $W = \BO{n^\e \log n}$, for any
constant $\e \in (0,1)$, by the previous recursive algorithm by ending the
recursion when the subproblem size is $\BT{W/\log W}$.

The main result of this section follows by a simple application of the preceding
lemma and of a result implicit in the proof of the lower bound due to Bilardi et
al.~\cite[Corollary~1]{BilardiSS12}.

\begin{theorem}\label{thm:fft-lb}
Let ${\mathcal A}$ be any algorithm computing, without recomputation, an $n$-input FFT
DAG on a BSP with $p$ processors, where $1 < p \leq n$. If each processor computes at most
$\epsilon (n \log n)$ nodes, for an arbitrary constant in $(0,1)$, of the FFT DAG and the
inputs are not initially replicated, then the communication complexity of the algorithm is
\[
H_{\A}(n,p) = \BOM{\frac{n \log n}{p \log(n/p)}}.
\]
\end{theorem}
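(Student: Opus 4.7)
My plan is to combine Lemma~\ref{lem:cyclic_shifts} with the bound implicit in the proof of~\cite[Corollary~1]{BilardiSS12}, in the same spirit in which Lemma~\ref{lem:sorting} and Lemma~\ref{lem:permutations} are combined to prove Theorem~\ref{thm:sorting}. Lemma~\ref{lem:cyclic_shifts} supplies a bandwidth-style bound $H_{\A} = \BOM{W/\log W}$, where $W$ is the maximum number of FFT nodes evaluated by any single processor; since nodes are not recomputed, the pigeonhole principle immediately yields $W \geq n\log n /p$. The Bilardi et al.\ result, in the form stated in~\cite{BilardiSS12}, supplies an information-theoretic bound $H_{\A} = \BOM{n\log n/(p\log(n/p))}$ which, as stated, requires that no processor holds more than $n/2$ of the $n$ outputs at the end of the computation.

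First, if this ``output-balance'' hypothesis is met, we are done directly by invoking~\cite{BilardiSS12}. Otherwise, some processor $\p_0$ holds more than $n/2$ outputs at the end. In that case I would plug the pigeonhole bound $W \geq n\log n/p$ into Lemma~\ref{lem:cyclic_shifts}, obtaining $H_{\A} = \BOM{(n\log n/p)/\log(n\log n/p)}$; this already matches the target up to constants whenever $\log(n/p) = \BOM{\log\log n}$, i.e.\ roughly whenever $p \leq n/\log n$. In the complementary sliver $n/\log n < p \leq n$ the target $n\log n/(p\log(n/p))$ is $O(n)$, and any processor that ends up holding $\Omega(n)$ outputs must either have received $\Omega(n)$ values from others---contributing directly to $H_{\A}$---or have itself evaluated a constant fraction of the last DAG level, forcing $W$ large enough for the $\BOM{W/\log W}$ bound to reach the target.

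The delicate part of the plan, and the step I expect to require the most care, is extracting from the proof of~\cite[Corollary~1]{BilardiSS12} a statement in a form that combines cleanly with Lemma~\ref{lem:cyclic_shifts}. While Corollary~1 itself assumes the output-balance hypothesis, its proof should yield a refined bound parameterized by the actual maximum number $R$ of outputs held by a single processor at the end, playing the role of $\log\Pi$ in Lemma~\ref{lem:permutations}; in particular, by Lemma~\ref{lem:HK-FFT} such a processor can contribute at most $O(R\log R)$ ``local entropy'' to the output, so the Bilardi et al.\ argument should give $H_{\A} = \BOM{(n\log n - R\log R)/(p\log(n/p))}$. Summing this refined bound with $\BOM{W/\log W}$, observing that $R\leq n$ and $W\geq n\log n/p$, and performing the same monotonicity/substitution step used at the end of Theorem~\ref{thm:sorting} would then yield the tight $\BOM{n\log n/(p\log(n/p))}$ throughout the full range $1<p\leq n$ in one shot, without the case split sketched above.
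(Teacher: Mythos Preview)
Your plan is essentially the paper's: combine Lemma~\ref{lem:cyclic_shifts} with the result from~\cite{BilardiSS12}. The execution differs in where you place the case split. The paper thresholds on $W$ at $n^{1/4}$: when $W\geq n^{1/4}$ the bound $W/\log W$ already dominates $n\log n/(p\log(n/p))$ across the entire range $1<p\leq n$ (because either $\log(n/p)\geq\log\log n$ and $W\geq n\log n/p$ suffices, or $\log(n/p)$ is tiny and $W\geq n^{1/4}$ makes $W/\log W$ polynomially large); when $W<n^{1/4}$ it uses the refined form from the \emph{proof} of~\cite[Corollary~1]{BilardiSS12}, namely $H_{\A}=\BOM{n\log(n/U^2)/(p\log(n/p))}$, with $U\leq n^{1/4}$ (or else $H_{\A}\geq U-W$ is already large). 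Your split on $U\lessgtr n/2$ is correct too, but it forces the extra ``sliver'' case $p>n/\log n$ that you then patch with the received-versus-computed dichotomy; the paper avoids this by choosing the $n^{1/4}$ threshold.

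One concrete correction to your unified plan: the parameterized bound actually extractable from the proof in~\cite{BilardiSS12} has numerator $n\log(n/U^2)=n\log n-2n\log U$, not $n\log n-R\log R$ as you conjecture. Your heuristic ``$R\log R$ local entropy'' via Lemma~\ref{lem:HK-FFT} is the wrong accounting for this DAG argument; the relevant quantity is how many of the $n$ input-to-output paths can be captured by a processor holding $U$ outputs, and that contributes $n\log U$ per side, not $U\log U$. This does not break your case-split proof (which only invokes the corollary in its black-box $U\leq n/2$ form), but it does mean the monotonicity-and-substitution step you sketch at the end, mimicking Theorem~\ref{thm:sorting}, would not go through as written.
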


\begin{proof}
If $W \geq n^{1/4}$, we have that $W \geq \max\{(n \log n)/p, n^{1/4}\}$, and thus
we observe that the bound given by Lemma~\ref{lem:cyclic_shifts} dominates the one
claimed by the theorem. Otherwise, when $W < n^{1/4}$, we use the following argument.
By reasoning as in~\cite[Corollary~1]{BilardiSS12}, if at the end of the algorithm
${\mathcal A}$ each processor holds at most $U \leq n$ output nodes of the FFT
DAG and recomputation is not allowed, then the communication complexity of
${\mathcal A}$ is $\BOM{\max\{0, n \log(n/U^2)/(p\log (n/p))\}}$. Since
$W < n^{1/4}$, each processor cannot contain more than $n^{1/4}$ output nodes,
that is, $U \leq n^{1/4}$, and the theorem follows.
\end{proof}

\section{Conclusions}\label{sec:conclusions}

We have presented new lower bounds on the amount of communication required
to solve some key computational problems in distributed-memory parallel
architectures. All our bounds have the same functional form of previous
results that appear in the literature; however, the latter are built by
making a critical use of some assumptions that rule out a large part of
possible algorithms. The novelty and the significance of our results stem
from the assumptions under which our lower bounds are developed, which
are much weaker than those used in previous work.

Our bounds are derived within the BSP model of computation, but can be
easily extended to other models for distributed computations based on
or similar to the BSP, such as LogP~\cite{CullerKPSSSSV96} and 
MapReduce~\cite{KarloffSV10,PietracaprinaPRSU12}. Moreover, we believe
that our results can be also ported to models for multicore computing 
(see, e.g.,~\cite{BlellochCGRCK08,Valiant11,ChowdhuryRSB13}), since our
proofs are based on some techniques that have already been exploited
in this scenario.

There is still much to do towards the establishment of a definitive theory
of communication-efficient algorithms. In fact, we were not able to remove
all the restrictions there were in place in previous work: in some cases
our lower bounds still make use of some technical assumptions, such as the
non-recomputation of intermediate results, or restrictions on the replication
of input data. Although it seems that such restrictions can be relaxed to
encompass a small amount of recomputation or input replication, it is an
open question to assess whether these assumptions are inherent to our proof
techniques or can be removed. In particular, it is not clear, in general,
when recomputation has the power to reduce communications, since many
lower bound techniques do not apply in this more general scenario (see,
e.g.,~\cite{BallardDHS12}). Providing tight lower bounds that hold also
when recomputation is allowed is a fascinating and challenging avenue for
future research.

\paragraph*{Acknowledgments.}
The authors would like to thank Gianfranco Bilardi and Andrea Pietracaprina
for useful discussions.

\bibliographystyle{abbrv}
\bibliography{biblio}

\begin{thebibliography}{10}

\bibitem{AggarwalCS90}
A.~Aggarwal, A.~K. Chandra, and M.~Snir.
\newblock Communication complexity of {PRAMs}.
\newblock {\em Theoretical Computer Science}, 71:3--28, 1990.

\bibitem{AggarwalV88}
A.~Aggarwal and J.~S. Vitter.
\newblock The input/output complexity of sorting and related problems.
\newblock {\em Communications of the ACM}, 31(9):1116--1127, 1988.

\bibitem{BallardDHLS12}
G.~Ballard, J.~Demmel, O.~Holtz, B.~Lipshitz, and O.~Schwartz.
\newblock Brief announcement: strong scaling of matrix multiplication
  algorithms and memory-independent communication lower bounds.
\newblock In {\em Proceedings of the 24th ACM Symposium on Parallelism in
  Algorithms and Architectures (SPAA)}, pages 77--79, 2012.

\bibitem{BallardDHS11}
G.~Ballard, J.~Demmel, O.~Holtz, and O.~Schwartz.
\newblock Minimizing communication in numerical linear algebra.
\newblock {\em SIAM Journal on Matrix Analysis and Applications},
  32(3):866--901, 2011.

\bibitem{BallardDHS12}
G.~Ballard, J.~Demmel, O.~Holtz, and O.~Schwartz.
\newblock Graph expansion and communication costs of fast matrix
  multiplication.
\newblock {\em Journal of the ACM}, 59(6):32:1--32:23, 2012.

\bibitem{BilardiPD00}
G.~Bilardi, A.~Pietracaprina, and P.~D'Alberto.
\newblock On the space and access complexity of computation {DAGs}.
\newblock In {\em Proceedings of the 26th International Workshop on
  Graph-Theoretic Concepts in Computer Science (WG)}, pages 47--58, 2000.

\bibitem{BilardiPPS07}
G.~Bilardi, A.~Pietracaprina, G.~Pucci, and F.~Silvestri.
\newblock Network-oblivious algorithms.
\newblock In {\em Proceedings of the 21st IEEE International Parallel and
  Distributed Processing Symposium (IPDPS)}, pages 1--10, 2007.

\bibitem{BilardiP99}
G.~Bilardi and F.~Preparata.
\newblock Processor-time tradeoffs under bounded-speed message propagation:
  Part {II}, lower bounds.
\newblock {\em Theory of Computing Systems}, 32(5):531--559, 1999.

\bibitem{BilardiSS12}
G.~Bilardi, M.~Scquizzato, and F.~Silvestri.
\newblock A lower bound technique for communication on {BSP} with application
  to the {FFT}.
\newblock In {\em Proceedings of the 18th International Conference on Parallel
  Processing (Euro-Par)}, pages 676--687, 2012.

\bibitem{BlellochCGRCK08}
G.~E. Blelloch, R.~A. Chowdhury, P.~B. Gibbons, V.~Ramachandran, S.~Chen, and
  M.~Kozuch.
\newblock Provably good multicore cache performance for divide-and-conquer
  algorithms.
\newblock In {\em Proceedings of the 19th ACM-SIAM Symposium on Discrete
  Algorithms (SODA)}, pages 501--510, 2008.

\bibitem{CheathamFSV95}
T.~Cheatham, A.~F. Fahmy, D.~C. Stefanescu, and L.~G. Valiant.
\newblock Bulk synchronous parallel computing -- a paradigm for transportable
  software.
\newblock In {\em Proceedings of the 28th Hawaii International Conference on
  System Sciences (HICSS)}, pages 268--275, 1995.

\bibitem{ChowdhuryRSB13}
R.~A. Chowdhury, V.~Ramachandran, F.~Silvestri, and B.~Blakeley.
\newblock Oblivious algorithms for multicores and networks of processors.
\newblock {\em Journal of Parallel and Distributed Computing}, 73(7):911--925,
  2013.

\bibitem{CullerKPSSSSV96}
D.~E. Culler, R.~M. Karp, D.~A. Patterson, A.~Sahay, E.~E. Santos, K.~E.
  Schauser, R.~Subramonian, and T.~von Eicken.
\newblock {LogP}: A practical model of parallel computation.
\newblock {\em Communications of the ACM}, 39(11):78--85, 1996.

\bibitem{FrigoS05}
M.~Frigo and V.~Strumpen.
\newblock Cache oblivious stencil computations.
\newblock In {\em Proceedings of the 19th International Conference on
  Supercomputing (ICS)}, pages 361--366, 2005.

\bibitem{Goodrich99}
M.~T. Goodrich.
\newblock Communication-efficient parallel sorting.
\newblock {\em SIAM Journal on Computing}, 29(2):416--432, 1999.

\bibitem{HongK81}
J.-W. Hong and H.~T. Kung.
\newblock {I/O} complexity: The red-blue pebble game.
\newblock In {\em Proceedings of the 13th ACM Symposium on Theory of Computing
  (STOC)}, pages 326--333, 1981.

\bibitem{IronyTT04}
D.~Irony, S.~Toledo, and A.~Tiskin.
\newblock Communication lower bounds for distributed-memory matrix
  multiplication.
\newblock {\em Journal of Parallel and Distributed Computing},
  64(9):1017--1026, 2004.

\bibitem{KarloffSV10}
H.~J. Karloff, S.~Suri, and S.~Vassilvitskii.
\newblock A model of computation for {MapReduce}.
\newblock In {\em Proceeding of the 21st Annual ACM-SIAM Symposium on Discrete
  Algorithms (SODA)}, pages 938--948, 2010.

\bibitem{Kerr70}
L.~R. Kerr.
\newblock {\em The Effect of Algebraic Structure on the Computational
  Complexity of Matrix Multiplication}.
\newblock PhD thesis, Cornell University, 1970.

\bibitem{Leighton92}
F.~T. Leighton.
\newblock {\em Introduction to Parallel Algorithms and Architectures: Arrays,
  Trees, Hypercubes}.
\newblock Morgan Kaufmann Publishers Inc., 1992.

\bibitem{LoomisW49}
L.~Loomis and H.~Whitney.
\newblock An inequality related to the isoperimetric inequality.
\newblock {\em Bulletin of The American Mathematical Society}, 55:961--962,
  1949.

\bibitem{PapadimitriouU87}
C.~H. Papadimitriou and J.~D. Ullman.
\newblock A communication-time tradeoff.
\newblock {\em SIAM Journal on Computing}, 16(4):639--646, 1987.

\bibitem{PietracaprinaPRSU12}
A.~Pietracaprina, G.~Pucci, M.~Riondato, F.~Silvestri, and E.~Upfal.
\newblock Space-round tradeoffs for {MapReduce} computations.
\newblock In {\em Proceedings of the 26th International Conference on
  Supercomputing (ICS)}, pages 235--244, 2012.

\bibitem{RanjanSZ11}
D.~Ranjan, J.~Savage, and M.~Zubair.
\newblock Strong {I/O} lower bounds for binomial and {FFT} computation graphs.
\newblock In {\em Proceedings of the 17th Annual International Conference on
  Computing and Combinatorics (COCOON)}, pages 134--145, 2011.

\bibitem{Savage95}
J.~E. Savage.
\newblock Extending the {H}ong-{K}ung model to memory hierarchies.
\newblock In {\em Proceedings of the First Annual International Conference on
  Computing and Combinatorics (COCOON)}, pages 270--281, 1995.

\bibitem{Savage98}
J.~E. Savage.
\newblock {\em Models of Computation: Exploring the Power of Computing}.
\newblock Addison-Wesley Longman Publishing Co., Inc., 1998.

\bibitem{SolomonikD11}
E.~Solomonik and J.~Demmel.
\newblock Communication-optimal parallel {2.5D} matrix multiplication and {LU}
  factorization algorithms.
\newblock In {\em Proceedings of the 17th International Conference on Parallel
  Processing (Euro-Par)}, pages 90--109, 2011.

\bibitem{Strassen69}
V.~Strassen.
\newblock Gaussian elimination is not optimal.
\newblock {\em Numerische Mathematik}, 14(3):354--356, 1969.

\bibitem{Tiskin98a}
A.~Tiskin.
\newblock Bulk-synchronous parallel multiplication of {B}oolean matrices.
\newblock In {\em Proceedings of the 25th International Colloquium on Automata,
  Languages and Programming (ICALP)}, pages 494--506, 1998.

\bibitem{Tiskin98}
A.~Tiskin.
\newblock {\em The Design and Analysis of Bulk-Synchronous Parallel
  Algorithms}.
\newblock PhD thesis, University of Oxford, 1998.

\bibitem{Tiskin11}
A.~Tiskin.
\newblock {BSP} (bulk synchronous parallelism).
\newblock In {\em Encyclopedia of Parallel Computing}, pages 192--199.
  Springer, 2011.

\bibitem{Valiant90}
L.~G. Valiant.
\newblock A bridging model for parallel computation.
\newblock {\em Communications of the ACM}, 33(8):103--111, 1990.

\bibitem{Valiant11}
L.~G. Valiant.
\newblock A bridging model for multi-core computing.
\newblock {\em Journal of Computer and System Sciences}, 77(1):154--166, 2011.

\bibitem{WuF81}
C.-L. Wu and T.-Y. Feng.
\newblock The universality of the shuffle-exchange network.
\newblock {\em IEEE Transactions on Computers}, 30:324--332, 1981.

\bibitem{WuK91}
I.-C. Wu and H.~T. Kung.
\newblock Communication complexity for parallel divide-and-conquer.
\newblock In {\em Proceedings of the 32nd annual Symposium on Foundations of
  Computer Science (FOCS)}, pages 151--162, 1991.

\end{thebibliography}

\end{document}